\documentclass{article}
\usepackage{graphicx} 
\usepackage{amsfonts,amsmath,amsthm,amssymb}
\usepackage{booktabs}
\usepackage{enumerate}
\usepackage{comment}
\usepackage{cite}

\usepackage[T1]{fontenc}\usepackage{yfonts}

\usepackage{xcolor}
\definecolor{darkred}{rgb}{0.45,0,0}
\definecolor{darkblue}{rgb}{0.2, 0.2, 0.9}
\definecolor{darkorange}{rgb}{0.8, 0.4, 0.1}

\definecolor{shadecolor}{rgb}{1,0.8,0.3}
\definecolor{myurlcolor}{rgb}{0.5,0,0}
\definecolor{mycitecolor}{rgb}{0,0,0.8}
\definecolor{myrefcolor}{rgb}{0,0,0.8}
\definecolor{hyperrefcolor}{rgb}{0.5,0,0}

\usepackage[
    colorlinks, 
    citecolor=blue, 
    urlcolor=darkred, 
    final, 
    hyperindex, 
    pagebackref, 
    linkcolor = darkblue
]{hyperref}

\newcommand{\define}[1]{{\bf \boldmath{#1}}\index{#1}}

\theoremstyle{plain}

\newtheorem{thm}{Theorem}

\newtheorem{prop}[thm]{Proposition}

\theoremstyle{remark}

\theoremstyle{definition}
\newtheorem{defn}[thm]{Definition}

\newtheorem{ex}[thm]{Example}

\newcommand{\maps}{\colon}    
\newcommand{\tr}{\operatorname{tr}}
\renewcommand{\P}{\mathbb{P}} 
\newcommand{\im}{\operatorname{im}} 

\newcommand{\R}{{\mathbb R}}  
\newcommand{\C}{{\mathbb C}}  
\newcommand{\Z}{{\mathbb Z}}  
\renewcommand{\H}{{\mathbb H}}  
\renewcommand{\O}{{\mathbb O}}  

\newcommand{\U}{{\rm U}}    
\newcommand{\SO}{{\rm SO}}    
\newcommand{\Sp}{{\rm Sp}}    
\newcommand{\SU}{{\rm SU}}    
\newcommand{\SL}{{\rm{SL}}}  
\newcommand{\Spin}{{\rm Spin}}    
\newcommand{\E}{{\rm E}}       
\newcommand{\F}{{\rm F}}       
\newcommand{\G}{{\rm G}}       
\renewcommand{\S}{\mathrm{S}} 
\newcommand{\Inn}{\mathrm{Inn}}  

\newcommand{\so}{{\mathfrak{so}}}  
\newcommand{\su}{{\mathfrak{su}}}  
\newcommand{\gl}{{\mathfrak{gl}}}  
\renewcommand{\u}{{\mathfrak{u}}}  
\newcommand{\e}{{\mathfrak{e}}}   
\newcommand{\g}{{\mathfrak{g}}}  
\newcommand{\p}{{\mathfrak{p}}}  
\renewcommand{\k}{{\mathbf{k}}}  
\newcommand{\inn}{{\mathfrak{inn}}} 

\newcommand{\M}{\mathrm{M}}   
\newcommand{\h}{\mathfrak{h}}  

\newcommand{\SM}{\mathrm{SM}}

\makeatletter
\newcommand*\bigcdot{\mathpalette\bigcdot@{.5}}
\newcommand*\bigcdot@[2]{\mathbin{\vcenter{\hbox{\scalebox{#2}{$\m@th#1\bullet$}}}}}
\makeatother

\date{\today}

\title{Jordan Pair Quantum Theory \\ and the Standard  Model}

\author{John C. Baez\footnote{School of Physics and Astronomy and School of Mathematics, University of Edinburgh and Department of Mathematics, U.\ C.\ Riverside} \and
Endre Bokor\footnote{School of Mathematics and School of Physics and Astronomy, University of Edinburgh} \and
Latham Boyle\footnote{Higgs Centre for Theoretical Physics, School of Physics and Astronomy, University of Edinburgh, and Perimeter Institute for Theoretical Physics} 
}

\date{\today}

\begin{document}

\maketitle 

\begin{abstract}
    Jordan pairs and hermitian Jordan triples were discovered by mathematicians studying Jordan algebras, which describe the possible algebras of observables in quantum mechanics.  We point out a striking correspondence between the 
    doubly exceptional hermitian Jordan triple (the so-called ``bi-Cayley'' triple) 
    and the structure of the Standard Model of particle physics.  We also point out how ordinary quantum mechanics may be reformulated, and generalized, using hermitian Jordan triples.
\end{abstract}

\tableofcontents

\section{Introduction}
The goal of this paper is to explain how quantum mechanics can be done using an algebraic structure called a hermitian Jordan triple, and show how some aspects of the Standard Model arise naturally from a hermitian Jordan triple built using the complexified octonions $\C \otimes \O$, also known as the `bioctonions'.  

When Jordan, von Neumann and Wigner \cite{Jordan:1933vh} introduced Euclidean Jordan algebras to formalize of the concept of `observables' in quantum physics, they classified these algebras.  They come in several infinite series, but there is also one exception: $\h_3(\O)$, consisting of $3 \times 3$ self-adjoint octonion matrices.  Ever since then, physicists have wondered whether this algebra plays a role in fundamental physics \cite{Gunaydin1978,Townsend1985,Goddard1987,gursey1996role,Manogue:2009gf,Dubois-Violette:2016kzx,Liebmann2019}.  The work of Todorov and Dubois-Violette \cite{Todorov:2018mwd} took a promising step by showing that the Standard Model gauge group is a subgroup of $\F_4$, the automorphism group of $\h_3(\O)$.    Later work showed that the Standard Model gauge group could be neatly described using Jordan subalgebras of $\h_3(\O)$.  Namely, it is precisely the largest connected subgroup of $\F_4$ that preserves a copy of $\h_3(\C)$ in $\h_3(\O)$, and a copy of $\h_2(\C)$ inside that \cite{Baez:2026tbe}.   This suggests an interesting interplay between octonionic quantum mechanics and ordinary complex quantum mechanics.

However, this line of work has not managed to explain the chiral structure of the fermions in the Standard Model.  To do that, one of the present authors \cite{Boyle:2020ctr} argued one should replace the octonions by the complexified octonions $\O_\C = \C \otimes \O$, also called the `bioctonions'.  A single generation of fermions, including a right-handed neutrino, can be identified with $\O_\C^2$.  Moreover, $\O_\C^2$ is the tangent space of a compact symmetric space on which the group $\E_6$ acts transitively as isometries.   This space is called the `bioctonionic Rosenfeld plane', or $\O_\C\P^2$.   The stabilizer of any point of the bioctonionic Rosenfeld plane is $\Spin(10) \cdot \U(1)$, where $\U(1)$ acts as multiplication by phases, and $\Spin(10)$ acts on the tangent space $(\C \otimes \O)^2$ just as it acts on one generation of fermions in an $\SO(10)$ GUT \cite{Georgi:1974my, Fritzsch:1974nn}.  As pointed out by Krasnov \cite{Krasnov:2022meo,Krasnov2}, the Standard Model gauge group is then the subgroup of $\Spin(10)$ giving transformations of $\O_\C^2$ that preserve some natural structure on this space.

Unfortunately, there is no Euclidean Jordan algebra underlying the space $\O_\C\P^2$ in the way $\h_3(\O)$ underlies $\O\P^2$.  Thus, this approach to the Standard Model might seem to have lost the clear connection to the foundations of quantum physics that we see in the exceptional Jordan algebra.  However, we show here that a slight generalization of Jordan algebras explains what is going on.

Starting in the late 1940s, researchers on Jordan algebras isolated two important generalization of Jordan algebras, called Jordan triples \cite{Jacobson1949} and Jordan pairs \cite{Loos1975,Loos1977}.  As we shall explain in detail, positive hermitian Jordan triples are equivalent to positive hermitian Jordan pairs, and both correspond to compact hermitian symmetric spaces.  Such spaces had been classified by Cartan much earlier \cite{Cartan1935}: there are four infinite series and two exceptions, one related to the Jordan triple $\h_3(\O_\C)$ and one related to the Jordan triple $\O_\C^2$.  The first of these gives rise to a hermitian symmetric space of complex dimension 27, while the second gives rise to $\O_\C\P^2$.

Since $\O_\C\P^2$ is connected not to a Euclidean Jordan algebra but to a positive hermitian Jordan triple, our first goal here is to explain how to do quantum physics with positive hermitian Jordan triples (or equivalently, pairs).   Our second goal is to apply the resulting theory to the Standard Model, linking this theory to the two exceptional Jordan triples, and especially $(\O_\C)^2$.

Our work has antecedents.  For example, in 1978 Biedenharn and Truini attempted to define the basic concepts of quantum mechanics---states, observables, and symmetries---starting from any Jordan triple system \cite{BiedenharnTruini1982a}.  They also applied these ideas to particle physics using $\h_3(\O_\C)$ \cite{BiedenharnTruini1982b, TruiniOlivieriBiedenharn1986}.   However, their formulation of quantum mechanics was different than ours, and they did not make the connection to the Standard Model that we explain here.  In 2020 Nasmith studied a series of nested 3-graded Lie algebras and used them to obtain the Standard Model gauge group and some aspects of its representation on known particles \cite{Nasmith2020,Nasmith2023}.  These nested 3-graded Lie algebras have remarkable properties: for example they arise from taking the Dynkin diagram for $\E_7$ and removing dots one at a time in a specific pattern, studied earlier by Manin in his work on del Pezzo surfaces \cite[Sec.\ 25.5.7]{Manin1986}, and later related to hermitian symmetric spaces \cite{SerganovaSkorobogatov2007}.  However, the physical significance of all this mathematics remained unclear.  In fact these 3-graded Lie algebras are tightly connected to positive hermitian Jordan triples, which can serve as the basis for a generalization of quantum theory---and the particular nesting pattern, where one Jordan triple fits inside another, plays a key role in our work.

\vskip 1em

\textbf{Plan of the Paper.}  In Section \ref{sec:jordan_pairs} we start with a quick introduction to hermitian Jordan pairs and triples, their relation to 3-graded Lie algebras and hermitian symmetric spaces, and their classification.  In Section \ref{sec:quantum} we describe a generalization of ordinary quantum theory based on positive hermitian Jordan pairs and triples.  In Section \ref{sec:exceptional} we discuss the two exceptional hermitian Jordan triples: $h_3(\O_\C)$, which we call the Albert triple, and $\O_\C^2$, which we call the bi-Cayley triple.   In Section \ref{sec:standard} we apply all these ideas to the Standard Model.

\section{Hermitian Jordan pairs and Jordan triples}
\label{sec:jordan_pairs}

During the 1960s and 70s, mathematicians realized that it was natural to pivot from the notion of Jordan algebra to a variety of more flexible concepts, including hermitian Jordan pairs and hermitian Jordan triples \cite{Loos1975}.  In this section, we introduce these structures and their connection to 3-graded Lie algebras and hermitian symmetric spaces, which play an important role in sections to come.

\subsection{Basic definitions}

We begin by defining hermitian Jordan pairs and hermitian Jordan triples.  In what follows all vector spaces will be assumed finite-dimensional, and complex unless otherwise specified.  We write $V$ for a pair of vector spaces $(V_+, V_-)$ over the complex numbers.   We use $\sigma$ to mean $+$ or $-$, and then $-\sigma$ means $-$ or $+$.  Given pairs of vector spaces $V$ and $W$, we write $f \maps V \to W$ for a pair of linear maps $f_\sigma \maps V_\sigma \to W_\sigma$.  We also write $g \maps V \times V \times V \to V$ for a pair of trilinear maps $g_\sigma \maps V_\sigma \times V_{-\sigma} \times V_\sigma \to V_\sigma$: the key example of this is the triple product below.

\begin{defn} A \define{Jordan pair} is a pair of vector spaces $V=(V_+,V_-)$ equipped with a \define{triple product}
\begin{equation}
    \{\;\bigcdot\;,\;\bigcdot\;,\;\bigcdot\;\} \maps V_{\sigma} \times V_{-\sigma} \times V_{\sigma} \to V_{\sigma}.
\end{equation}
Given an element $v\in V$, by which we mean a pair $v=(v_+,v_-)$ with $v_\sigma \in V_\sigma$, we define $\delta_{v} \maps V\to V$ to be the pair of linear maps $\delta_{v}^{\sigma} \maps V_{\sigma}\to V_{\sigma}$ given by 
\begin{equation}
  \label{eq:delta}
    \delta_{v}^{\sigma}=\sigma \{v_{\sigma},v_{-\sigma},\;\bigcdot\;\}.
\end{equation}
Finally, we take these ingredients to satisfy the following two axioms:
\begin{subequations}
  \begin{eqnarray}
    \{x,y,z\} &=&\{z,y,x\} \label{eq:symmetry} \\
    \delta_{v}\{x,y,z\} &=&\{\delta_{v}x,y,z\} +\{x,\delta_{v}y,z\} +\{x,y,\delta_{v}z\}.\label{eq:derivation}
  \end{eqnarray}  
\end{subequations}
\end{defn}

Some comments on this definition are in order.  First, note that the outer two arguments of the triple product are drawn from one space, $V_{\sigma}$, while the middle argument is drawn from the other space $V_{-\sigma}$, and the result lies the democratically-preferred space $V_{\sigma}$.  Equation \eqref{eq:delta} can be written more explicitly as
\begin{equation} 
\delta_{v}w_{+}=\{v_{+},v_{-},w_{+}\} , \qquad
\delta_{v}w_{-}=-\{v_{-},v_{+},w_{-}\} .
\end{equation}
Axiom \eqref{eq:symmetry} says that the triple product is symmetric under reversing its arguments, while axiom \eqref{eq:derivation} says that $\delta_v$ acts as a derivation.  
The second axiom is usually presented in a number of other ways \cite{McCrimmon2004, Loos1975}, which are equivalent to ours.  For example, if we avoid the use of $\delta$, it becomes
\[  \{v_+,v_-,\{x,y,z\}\} = \{\{v_+,v_-,x\},y,z\}- 
\{x,\{v_-,v_+,y\},z\} + \{x,y,\{v_+,v_-,z\}\} 
\]
where $x,z \in V_\delta$ and $y \in V_{-\delta}$.

\begin{ex}  
\label{ex:Mpq}
For the simplest example of a Jordan pair, let $V_+ = \M_{p,q}(\C)$ be the space of $p \times q$ complex matrices, let $V_- = \M_{q,p}(\C)$, and let
\[    \{x,y,z\} = \frac{1}{2}(x y z + z y x) \]
for $x,z \in V_\sigma$, $y \in V_{-\sigma}$.  Axiom \eqref{eq:symmetry} is clear and axiom \eqref{eq:derivation} can be checked with a calculation. 
\end{ex}

\begin{defn} A \define{hermitian Jordan pair} is a Jordan pair $V = (V_+, V_-)$ equipped with an antilinear map $\kappa \maps (V_+, V_-) \to (V_-, V_+)$ that is an involution in the following sense:
\begin{equation}
\label{eq:kappa_involution}
    \kappa_\sigma \kappa_{-\sigma} = 1
\end{equation}
and preserves the triple product in the following sense:
\begin{equation}
\label{eq:kappa_preserves_triple_product}
\kappa_\sigma \{x,y,z\} = \{\kappa_\sigma x, \kappa_{-\sigma} y, \kappa_{\sigma} z \}  
\end{equation}
for all $x,z \in V_\sigma$, $y \in V_{-\sigma}$. 
\end{defn}

It will cause no ambiguity to abbreviate 
equations \eqref{eq:kappa_involution} and \eqref{eq:kappa_preserves_triple_product} as follows:
\begin{equation}
\kappa^2 = 1, \qquad
\kappa \{x,y,z \} = \{\kappa x, \kappa y, \kappa z\}.
\end{equation}

\begin{ex}
The Jordan pair of Example \ref{ex:Mpq} becomes hermitian if we define $\kappa \maps V_\sigma \to V_{-\sigma}$ by $\kappa u = u^\dagger$, where $\dagger$ stands for the conjugate transpose matrix.
\end{ex}

Given a hermitian Jordan pair, we can use $\kappa$ to identify $V_+$ and $V_-$ and express all the structure in terms of $V_+$.   This leads to the definition of `hermitian Jordan triple'.  The first step is to define a new map
\begin{equation}
[\;\bigcdot\;,\;\bigcdot\;,\;\bigcdot\;]\maps V_+ \times V_+ \times V_+ \to V_+ 
\end{equation}   
given by
\begin{equation}
    [x,y,z]=\{x,\kappa_+ y,z\}.
\end{equation}
This new triple product is \define{skew-linear}, by which we mean linear in its first and last arguments and antilinear in its middle argument.  The axioms for a hermitian Jordan pair can then be expressed in terms of this skew-linear triple product on $V_+$, and these give the definition of a hermitian Jordan triple system---or `hermitian Jordan triple' for short.   

\begin{defn}
A \define{hermitian Jordan triple} is a complex vector space $W$ equipped with a skew-linear map
\[  [\;\bigcdot\;,\;\bigcdot\;,\;\bigcdot\;] \maps W \times W \times W \to W \]
obeying 
\begin{subequations}
  \begin{eqnarray}
    [x,y,z] &=& [z,y,x] \label{eq:hermitian_symmetry} \\
    {[a,b,[x,y,z]]} &=& [[a,b,x],y,z] - [x,[b,a,y],z] + [x,y,[a,b,z]]    \label{eq:hermitian_derivation}
  \end{eqnarray}  
\end{subequations}
\end{defn}

By how the definitions are set up, hermitian Jordan triple systems are equivalent to hermitian Jordan pairs: we can pass freely from one to the other, and use whichever seems most convenient at the time.

\begin{ex} 
 $\M_{p,q}(\C)$ is a hermitian Jordan triple
with skew-linear triple product 
\[  [x,y,z] = \frac{1}{2}(x y^\dagger z + z y^\dagger x).\]
\end{ex}

This example obeys a positivity condition that turns out to be important more generally, because it allows us to introduce inner products into Jordan pair quantum theory.  Given a hermitian Jordan triple $W$, any pair of elements $x,y \in W$ defines a linear map $[x,y, \; \bigcdot \;] \maps W \to W$, and taking the trace of this we define
\begin{equation}
\label{eq:inner_product}
\langle x \vert y \rangle = \text{Tr}[x,y, \; \bigcdot \; ] . 
\end{equation}
This is a sesquilinear form on $W$: linear in the first argument, antilinear in the second.

\begin{defn}
\label{defn:positive}
A hermitian Jordan triple $W$ is \define{positive} if its sesquilinear form is an inner product, i.e.
\[    x \ne 0 \implies \langle x \vert x \rangle > 0.\]
We say a hermitian Jordan pair is positive if its corresponding hermitian Jordan triple is positive.
\end{defn}

In Section \ref{subsec:symmetric_spaces} we explain the classification of positive hermitian Jordan triples, and how they give hermitian symmetric spaces.

\subsection{Automorphisms and derivations}
\label{subsec:derivations}

An \define{automorphism} $U \maps V \to V$ of a Jordan pair $V$ is a pair of invertible linear maps $U_{\sigma}\maps V_{\sigma}\to V_{\sigma}$ satisfying
\begin{equation}
  \label{Jordan_pair_automorphism}
  U\{x,y,z\}=\{Ux,Uy,Uz\}.
\end{equation}
An infinitesimal generator of automorphisms is called a \define{derivation}.  Writing $U=\exp(t\delta)$ for some $\delta \maps V \to V$, differentiating equation \eqref{Jordan_pair_automorphism} with respect to $t$, and setting $t = 0$, we see that a derivation $\delta$ is a map $\delta \maps V \to V$ obeying
\begin{equation}
  \delta\{x,y,z\}=
  \{\delta x,y,z\}
  +\{x,\delta y,z\}
  +\{x,y,\delta z\}.
\end{equation}
Conversely, any $\delta \maps V \to V$ obeying this equation is a derivation.   The maps $\delta_{v} \maps V \to V$ defined in \eqref{eq:delta} are examples of derivations.  In fact the vector space spanned by such derivations is a Lie algebra, because one can check that
\begin{equation}
\label{eq:bracket_of_inner_derivations}
[\delta_v, \delta_w] = \delta_{(\delta_{v}w_{+},w_{-})}+\delta_{(w_{+},\delta_{v}w_{-})}. 
\end{equation}
We denote this Lie algebra as $\inn(V)$, and call its elements \define{inner derivations} of $V$.   We call the Lie subgroup of automorphisms of $V$ generated by inner derivations the group of \define{inner automorphisms} of $V$, $\Inn(V)$.

A \define{real automorphism} of a hermitian Jordan pair $V$ is defined to be a map $U \maps V \to V$ satisfying (\ref{Jordan_pair_automorphism}) and commmuting with the antilinear involution $\kappa \maps (V_+,V_-) \to (V_-,V_+)$ that maps $(v_+,v_-)$ to $(\kappa v_-, \kappa v_+)$.  For short, we say $U$ obeys
\begin{equation}
 U \kappa = \kappa U .
\end{equation}

Since $\kappa$ is antilinear, the Lie group of real automorphisms of $V$ is a real Lie group, and its Lie algebra is a real Lie algebra.  The same is true for the Lie algebra of the group of real inner automorphisms,
\[    \inn_\R(V) = \{ \delta \in \inn(V) \; \vert \; \kappa \delta = \delta \kappa \} .\]
We call this the Lie algebra of \define{real inner derivations} of $V$.

We can get some real inner derivations of a hermitian Jordan pair as follows:

\begin{prop}
\label{prop:Inn_R}
Suppose $V$ is a hermitian Jordan pair with antilinear involution $\kappa$.   For any $v = (v_+, v_-) \in V_+ \times V_-$, the map
\begin{equation}
\label{eq:Inn_R}
  D_v = \delta_{v}-\delta_{\kappa v}
\end{equation}
is an element of $\inn_\R(V)$.
\end{prop}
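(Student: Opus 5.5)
The plan has two parts. First, $D_v$ lies in $\inn(V)$. Second, $D_v$ commutes with $\kappa$.

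For the first part, $\kappa v = (\kappa_- v_-, \kappa_+ v_+)$ is again an element of $V_+ \times V_-$, so $\delta_{\kappa v}$ is one of the spanning derivations of $\inn(V)$. Hence $D_v = \delta_v - \delta_{\kappa v}$ is a linear combination of inner derivations and lies in $\inn(V)$, which is a vector space by definition.

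For the second part, I will first establish a conjugation identity, $\kappa \,\delta_w\, \kappa = -\delta_{\kappa w}$ for every $w=(w_+,w_-)$. I will check this componentwise using \eqref{eq:kappa_involution} and \eqref{eq:kappa_preserves_triple_product}. Take $x \in V_+$. Then $\kappa_+ x \in V_-$, and
\[
\delta_w(\kappa_+ x) = -\{w_-, w_+, \kappa_+ x\}.
\]
Applying $\kappa_-$ and using that $\kappa$ preserves the triple product gives
\[
\kappa_-\delta_w\kappa_+ x = -\{\kappa_- w_-, \kappa_+ w_+, \kappa_-\kappa_+ x\} = -\{\kappa_- w_-, \kappa_+ w_+, x\}.
\]
Since $(\kappa w)_+ = \kappa_- w_-$ and $(\kappa w)_- = \kappa_+ w_+$, this equals $-\delta_{\kappa w} x$ on $V_+$. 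The same computation on $V_-$ gives
\[
\kappa_+\delta_w\kappa_- y = \{\kappa_+ w_+, \kappa_- w_-, y\} = -\delta_{\kappa w} y.
\]
So the sign $\sigma$ in \eqref{eq:delta} flips when $\kappa$ exchanges the two spaces, and this sign flip is what makes the identity hold.

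With this identity in hand, and using $\kappa^2=1$ (so $\kappa\kappa v = v$), I compute
\[
\kappa D_v \kappa = -\delta_{\kappa v} + \delta_{\kappa\kappa v} = \delta_v - \delta_{\kappa v} = D_v.
\]
Multiplying on the right by $\kappa$ then gives $\kappa D_v = D_v\kappa$, so $D_v \in \inn_\R(V)$.

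The main obstacle is careful bookkeeping rather than anything deep. One must track which component of $\kappa$ acts on which space, and confirm that the pair map $(v_+,v_-)\mapsto(\kappa v_-,\kappa v_+)$ is applied consistently with the sign convention in \eqref{eq:delta}. I would verify the conjugation identity slowly on both components before concluding. Antilinearity causes no trouble, since $\delta_w$ is linear and the conjugation $\kappa\delta_w\kappa$ is again linear.
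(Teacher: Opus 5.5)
Your proposal is correct and follows essentially the paper's route. Both arguments rest on the conjugation identity $\kappa\,\delta_w\,\kappa = -\delta_{\kappa w}$ and then use $\kappa^2=1$ to get $\kappa D_v \kappa = D_v$. The only difference is that you skip the paper's general (here redundant) check that $\kappa\delta\kappa$ is a derivation: you note directly that $\delta_{\kappa v}$ is already one of the spanning inner derivations.
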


\begin{proof}
If $\delta$ is a derivation of $V$ so is $\kappa \delta \kappa$, since
\[    (\kappa \delta \kappa)\{x,y,z\} = 
 \kappa \delta \{\kappa x, \kappa y, \kappa z\} = \]
\[   \{\kappa \delta \kappa x,  y,  z\} + 
    \{ x,  \kappa \delta \kappa y,  z\} + 
    \{ x,   y,  \kappa \delta \kappa z\} \]
where we use $\kappa^2 = 1$.   Thus $\delta + \kappa \delta \kappa$ is a derivation, and it commutes with $\kappa$ because
\[  \kappa (\delta + \kappa \delta \kappa) = \kappa \delta + \delta \kappa
= (\delta + \kappa \delta \kappa) \kappa .\]  
We know $\delta_v$ is an inner derivation, and $\kappa \delta_v \kappa$ is also inner since for $(x_+,x_-) \in V_+ \times V_-$ we have
\[ 
\begin{array}{ccl}
(\kappa \delta_v \kappa)(x_+,x_-) &=& \kappa \delta_v (\kappa x_-, \kappa x_+) \\ [3pt]
&=& \kappa \left(\{v_+, v_-, \kappa x_-\} , - \{v_-, v_+, \kappa x_+\}\right) \\ [3pt]
&=& \left(-\{\kappa v_-, \kappa v_+, x_+\}, 
      \{\kappa v_+, \kappa v_-, x_-\}\right)\\ [3pt]
&=& - \delta_{\kappa v} (x_+,x_-)
\end{array} 
\]
 It follows that $\delta_v + \kappa \delta_v \kappa = \delta_v - \delta_{\kappa v}$ is an inner derivation that commutes with $\kappa$.  
\end{proof}

We conjecture, on the basis of very little evidence, that $\inn_\R(V)$ is actually spanned by derivations of the form $D_v$.

Since hermitian Jordan pairs correspond to hermitian Jordan triples, the whole theory of automorphisms and derivations carries over to hermitian Jordan triples.  Suppose $W$ is a hermitian Jordan triple.   An \define{automorphism} of $W$ is an invertible linear map $U \maps W \to W$ such that 
\begin{equation}  
U[x,y,z] = [Ux,Uy,Uz] .
\end{equation}
A \define{derivation} of $W$ is a linear map $\delta \maps W \to W$ such that
\begin{equation}
\delta[x,y,z] = [\delta x, y, z] + [x, \delta y, z] + [x, y, \delta z] .
\end{equation}
Any inner derivation $\delta_v$ on a hermitian Jordan pair $V = (V_+, V_-)$ corresponds to a derivation on the corresponding hermitian Jordan triple $W = V_+$.  We call the linear span of such derivations the space of \define{inner derivations} of $W$, $\inn(W)$.  This forms a Lie algebra by equation \eqref{eq:bracket_of_inner_derivations}.
We call the Lie group of transformations of $W$ generated by this Lie algebra the group of \define{inner automorphisms} of $W$, and denote it as $\Inn(W)$.  Similarly, we can define real inner derivations of $W$, and say the Lie subgroup of real automorphisms of $W$ generated by these is the group of \define{real} inner automorphisms, $\Inn_\R(W)$.  We call its Lie algebra $\inn_\R(W)$.

We can transfer the real inner derivations $D_{v,v'}$ of a hermitian Jordan pair $V$ given by Proposition \ref{prop:Inn_R} to inner derivations of the corresponding Jordan triple $W = V_+$, and we get the following:

\begin{prop}
Suppose $W$ is a hermitian Jordan triple.  For any pair $w, w' \in W$ the map
\begin{equation}
D_{w,w'} = [w,w', \; \bigcdot \; ] - [w', w, \; \bigcdot \; ].
\end{equation}
is an element of $\inn(V)$ that is skew-adjoint in the following sense:
\begin{equation}
\langle D_{w,w'} x \; \vert \; x' \rangle =
-\langle x \; \vert \; D_{w,w'} x' \rangle 
\end{equation}
\end{prop}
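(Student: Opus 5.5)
The proof has two parts. First we show that $D_{w,w'}$ is inner by recognizing it as the $V_+$-component of one of the real inner derivations of Proposition~\ref{prop:Inn_R}. Then we prove skew-adjointness by showing that each operator $[a,b,\;\bigcdot\;]$ has adjoint $[b,a,\;\bigcdot\;]$ with respect to the form \eqref{eq:inner_product}. That adjoint relation comes from the derivation axiom \eqref{eq:hermitian_derivation} together with the fact that the trace of a commutator vanishes.

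\emph{Innerness.} Let $V = (W, \overline{W})$ be the hermitian Jordan pair corresponding to $W$, so that $[x,y,z] = \{x, \kappa_+ y, z\}$. Take $v = (w, \kappa_+ w') \in V_+ \times V_-$. On $V_+$ the derivation $\delta_v$ acts as $\{w, \kappa_+ w', \;\bigcdot\;\} = [w,w',\;\bigcdot\;]$. By \eqref{eq:kappa_involution}, $\kappa v = (\kappa_- \kappa_+ w', \kappa_+ w) = (w', \kappa_+ w)$, so $\delta_{\kappa v}$ acts on $V_+$ as $[w',w,\;\bigcdot\;]$. Hence $D_{w,w'}$ is exactly the $V_+$-component of $D_v = \delta_v - \delta_{\kappa v}$. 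Proposition~\ref{prop:Inn_R} says $D_v$ is a (real) inner derivation of $V$, so $D_{w,w'}$ is an inner derivation of $W$.

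\emph{Skew-adjointness.} Write $L(a,b) = [a,b,\;\bigcdot\;]$; this is a complex-linear map $W \to W$, so its trace makes sense, and $\langle x \vert y\rangle = \tr L(x,y)$. Read axiom \eqref{eq:hermitian_derivation} as an identity of operators in the free variable $z$:
\[
L([a,b,x],y) - L(x,[b,a,y]) = L(a,b)L(x,y) - L(x,y)L(a,b).
\]
The right side is a commutator, so its trace is zero. Therefore
\[
\langle [a,b,x] \vert y\rangle = \langle x \vert [b,a,y]\rangle,
\]
which says the adjoint of $L(a,b)$ is $L(b,a)$. Applying this to both terms of $D_{w,w'} = L(w,w') - L(w',w)$ gives
\[
\langle D_{w,w'}x \vert x'\rangle = \langle x \vert L(w',w)x'\rangle - \langle x \vert L(w,w')x'\rangle = -\langle x \vert D_{w,w'}x'\rangle .
\]

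The main point needing care is the operator form of \eqref{eq:hermitian_derivation}. One must check that the sign of the middle term and the swapped order $[b,a,y]$ match the adjoint statement exactly, since that is precisely what makes the swap $w \leftrightarrow w'$ produce the minus sign. One must also note that all the traces involved are of complex-linear operators, even though the triple product is antilinear in its middle slot. Once this is checked, the rest is bookkeeping. Positivity is not needed for this identity.
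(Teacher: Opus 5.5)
Your proof is correct. For the innerness claim you follow the route the paper indicates. The paper gives no separate proof, only the remark that $D_{w,w'}$ is the transfer to $W=V_+$ of the real inner derivation $D_v$ of Proposition~\ref{prop:Inn_R}. Your explicit choice $v=(w,\kappa_+w')$, with $\kappa v=(w',\kappa_+w)$, makes that remark precise.

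Going through $D_v$, rather than through the two summands separately, is also the right choice. Because $D_v$ commutes with $\kappa$, its $V_+$-component is an honest derivation of the triple $W$. By contrast, $[w,w',\,\bigcdot\,]$ alone is not one, since axiom \eqref{eq:hermitian_derivation} acts on the middle slot by $-[w',w,\,\bigcdot\,]$ rather than by $[w,w',\,\bigcdot\,]$.

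For skew-adjointness the paper supplies no argument at all; it is simply asserted again in Section~\ref{sec:quantum}. Your derivation of $\langle [a,b,x]\,\vert\,y\rangle=\langle x\,\vert\,[b,a,y]\rangle$ from \eqref{eq:hermitian_derivation} and the vanishing trace of the commutator $[L(a,b),L(x,y)]$ therefore fills a genuine gap. This is the standard associativity of the trace form. It proves slightly more than stated, namely the adjoint relation for each $[a,b,\,\bigcdot\,]$ separately, and, as you note, it uses no positivity.
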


\subsection{Tripotents and the Peirce decomposition}
\label{subsec:tripotents}

In the theory of associative algebras and Jordan algebras, elements with $e^2 = e$ are very important: these are called idempotents.  In the theory of Jordan triples, `tripotents' play a similar role.  Here we briefly sketch the theory of these; for details and full proofs see Loos \cite{Loos1977}, Chu \cite{Chu2012} and Roos \cite{Roos2000}.

Suppose $W$ is a positive hermitian Jordan triple.
A \define{tripotent} is an element $e \in W$ with 
\begin{equation}
[ e, e , e ] = e .
\end{equation}
Unlike an idempotent, we can multiply a tripotent by a phase (a complex number with norm 1) and obtain a new tripotent.  This fact is important in applications to quantum theory: indeed, in Section \ref{sec:quantum} we shall see a positive hermitian Jordan triple whose ``minimal'' tripotents are exactly the unit vectors in $\C^n$.  Thus, minimal tripotents, to be defined below, are a generalization of pure states.

If $e$ is a tripotent, the eigenvalues of the operator
$[e,e,\, \bigcdot \, ]$ lie in the set $\{0, \tfrac{1}{2}, 1\}$.  We call the eigenspaces of this operator the \define{Peirce spaces} of $e$:
\begin{equation}  
E_\lambda(e) = \{w \in W \; \vert \; \{e, e, w\} = \lambda w \}.
\end{equation}
More specifically we call $E_0(e)$, $E_{1/2}(e)$ and $E_1(e)$ the Peirce $0$-space, Peirce $1/2$-space and Peirce $1$-space of $e$, respectively.   Peirce spaces are a powerful tool for understanding the structure of a positive hermitian Jordan triple, in part because of this rule:
\begin{equation}
\label{eq:peirce_multiplication_rule}
x \in E_i(e), \, y \in E_j(e), \, z \in E_k(e) \implies [x,y,z] \in E_{i-j+k}(e) .
\end{equation}
This implies $[x,y,z] = 0$ if $i-j+k \notin \{0,1/2,1\}$.  It also implies that each Peirce space is closed under the triple product, and thus a hermitian Jordan triple system in its own right.

Suppose $e, f \in W$ are two tripotents.  One can show that $e$ is in the Peirce $0$-space of $f$ if and only if $f$ is in the Peirce $0$-space of $e$.   In this case we say $e$ and $f$ are \define{orthogonal}.   A fundamental result on positive hermitian Jordan triple systems is that every $w \in W$ can be written as
\[   w = \lambda_1 e_1 + \cdots \lambda_n e_n \]
for some $\lambda_i > 0$ and a collection of mutually orthogonal tripotents $e_i$.   This is sometimes called the \define{spectral theorem}.

There is a partial ordering on tripotents: given tripotents $e, f \in W$ we say $e \le f$ if there is a tripotent $e'$, orthogonal to $e$, such that $f = e + e'$.  A tripotent $f$ is called \define{minimal} if the only tripotent $e$ with $e \le f$ is $e = 0$.  A \define{frame} for $W$ is a maximal set of mutually orthogonal minimal tripotents.  One can think of a frame as similar to an orthonormal basis.  In particular, every frame for $W$ has the same cardinality, which is finite.  This cardinality is called the \define{rank} of $W$.

We define the \define{rank} of a tripotent $e$ to be the length $n$ of the longest chain of tripotents $0 < e_1 < \cdots < e_n = e$.  For example, a minimal tripotent has rank $1$, while the largest possible rank of a tripotent in $W$ is the rank of $W$ itself.  It is easy to see that the rank of a tripotent does not change when we multiply it by a phase.  Less obviously, a minimal tripotent is the same as a tripotent whose Peirce 1-space has dimension 1.

The concepts introduced here interact in some beautiful ways. In a \define{simple} hermitian Jordan triple---i.e., one that is not a direct sum of two others in a nontrivial way---any rank $r$ tripotent $e$ can be mapped to any other using the group of inner automorphisms \cite[Corollary 5.12]{Loos1977}.  Moreover, if $W$ is a simple hermitian Jordan triple, then so is each of its Peirce spaces \cite{Mccrimmon1979Peirce}.


We also need some explicit formulas for operators that project $W$ onto the Peirce subspaces of a tripotent \cite{edwards1996range}.  If we define the operators
\begin{subequations}
  \begin{eqnarray}
    Q_{x}y&=&[x,y,x] \\
    D(x,y)z&=&[x,y,z]
  \end{eqnarray}
\end{subequations}
on a hermitian Jordan triple, and pick an arbitrary tripotent $e$, the projection operators onto the $0, 1/2$ and $1$ Peirce spaces of $e$ are given explicitly as follows:
\begin{subequations}
  \label{Peirce_projectors}
  \begin{eqnarray}
   P_0(e)&=& 1_{W} -2D(e,e)+Q_e^2 \\
    P_{1/2}(e)&=&2\left(D(e,e)-Q_e^2\right) \\ 
        P_1(e)&=&Q_e^2 
\end{eqnarray}
\end{subequations}
  
\subsection{Reinterpretation as graded Lie algebras}
\label{subsec:graded}

Jordan pairs and Jordan triples may seem exotic at first, but they are closely related to some less exotic structures: graded Lie algebras.  This is crucial for understanding their applications to geometry.  Building on earlier work by Tits--Kantor--Koecher \cite{Tits1962,Kantor1964,Koecher1967} and Meyberg \cite{Meyberg1970}, Loos \cite{Loos1975} showed how to construct a Jordan pair $V$ from a \define{3-graded Lie algebra}, meaning a Lie algebra $\g$ with a $\Z$-grading such that.
\begin{equation}
    \g=\g_{-1}\oplus\g_{0}\oplus\g_{+1}.
\end{equation}
Saying that $\g$ is $\Z$-graded implies that $[\g_{i},\g_{j}]\in\g_{i+j}$.  The construction is simply to take
\begin{equation}  
\label{eq:Jordan_from_3-graded_1}
V_+ = \g_1, \quad V_- = \g_{-1} \end{equation}
and define
\begin{equation} 
\label{eq:Jordan_from_3-graded_2}
\{u,v,w\} = [[u,v],w]  
\end{equation}
whenever $u,w \in V_\sigma$ and $v \in V_{-\sigma}$.

Conversely, one can construct a 3-graded Lie algebra $\g$ from a Jordan pair $V$.  Here we take
\begin{equation}
  \g_{-1} =V_{-} , \quad \g_0 = \inn(V), \quad \g_1 = V_{+}
\end{equation}
We define the Lie bracket on $\g$ as follows.  Suppose $v=(v_{+},v_{-})$ and $w=(w_{+},w_{-})$,
with $v_{\sigma}, w_{\sigma}\in V_{\sigma}$.  Then 
\begin{itemize}
\item $[v_\sigma, w_\sigma] = 0$,
\item $[v_+, v_-] = \delta_v$,
\item $[\delta_v, w_\sigma] = \delta_v w_\sigma$, 
\item $[\delta_v, \delta_w]$ is the commutator of inner derivations, as in \eqref{eq:bracket_of_inner_derivations}.
\end{itemize}
This process of going from a Jordan pair to a 3-graded Lie algebra is called the \define{Tits--Kantor--Koecher construction}.

The Tits--Kantor--Koecher construction is not inverse to going from a 3-graded Lie algebra to a Jordan pair.  For example, there is no guarantee that starting from a 3-graded Lie algebra $\g$ and building a Jordan pair $V$, we get $\textrm{Inn}(V) = \g_0$.  Technically, the two processes are adjoint functors, with the functor from 3-graded Lie algebras to Jordan pairs being the right adjoint \cite{CavenySmirnov2014}.

To get a \emph{hermitian} Jordan pair from a 3-graded Lie algebra $\g$, we need to equip $\g$ with some extra structure. Namely, we need a grade-reversing antilinear involution: that is, an antilinear map $\theta \maps \g \to \g$ that preserves the Lie bracket, obeys $\theta^2 = 1$, and has $\theta \maps \g_i \to \g_{-i}$.  From $\g$ we can build a Jordan pair $V$ as before, with $V_+ = \g_1$ and $V_- = g_{-1}$.  Then we can equip $V$ with an antilinear involution $\kappa$ that is simply $\theta$ on each space $V_\pm$.  It is easy to check that $(V,\kappa)$ obeys the axioms of a hermitian Jordan pair.

Conversely, there is a process for getting a 3-graded Lie algebra with grade-reversing antilinear involution from a hermitian Jordan pair $(V,\kappa)$.  We use the previously described method to get a 3-graded Lie algebra $\g$ from $V$.  Then we equip $\g$ with a grade-reversing antilinear involution $\theta$ as follows:
\begin{itemize}
\item $\theta \maps \g_1 \to \g_{-1}$ is $\kappa_+ \maps V_+ \to V_-$,
\item $\theta \maps \g_{-1} \to \g_1$ is $\kappa_- \maps V_- \to V_+$,
\item $\theta \maps \g_0 \to \g_0$ maps each inner derivation $\delta_v$, where $v = (v_+, v_-)$, to the inner derivation $\delta_{v'}$, where $v' = -(\kappa v_-, \kappa v_+)$.
\end{itemize}
In general, this process is not inverse to the process for getting a hermitian Jordan pair from a 3-graded Lie algebra with grade-reversing antilinear involution.  But we shall see that in the main examples of interest they will act as inverses.

Given a complex 3-graded Lie algebra $\g$ with a grade-reversing antilinear involution $\theta$, the fixed points of $\theta$ form a real Lie algebra
\begin{equation}
\label{eq:k}
\k = \{x \in \g \; \vert \; \theta x = x \}
\end{equation}
Moreover $\k$ is naturally a $\Z_2$-graded Lie algebra---not a Lie superalgebra, but an ordinary Lie algebra decomposed into even and odd parts, $\k = \k_0 \oplus \k_1$, with $[\k_i, \k_j] \subseteq \k_{(i+j) \bmod 2}$.  To define the grading we set
\begin{subequations}
  \begin{eqnarray}
    \k_0 &=& \{x \in \g_0 \; \vert \; \theta x = x \}  \label{eq:k_0} \\
    \k_1 &=& \{x \in \g_{-1} \oplus \g_{1} \; \vert \; \theta x = x \}.   \label{eq:k_1}
  \end{eqnarray}  
\end{subequations}

What does all this look like when $\g$ arises from a hermitian Jordan pair $(V,\kappa)$, or equivalently a hermitian Jordan triple $W$?   First, $\k_0$ is the real Lie algebra we called $\inn_\R(W)$ at the end of Section \ref{subsec:derivations}: it is isomorphic to the Lie algebra of \emph{real} inner derivations of $V$, meaning those commuting with $\kappa$.   Second, $\k_1$ is isomorphic, as a vector space, to the underlying real vector space of $\g_1$, since elements of $\k_1$
can all be written uniquely in the form $z + \theta z$ with $z \in \g_1$.   Thus we can identify $\k_1$ with $\g_1 = W$.  

Summarizing, the $\Z_2$-graded real Lie algebra $\k$ associated to a hermitian Jordan triple $W$ is
\begin{equation}
\label{eq:k_summary}
  \k = \inn_{\R}(W) \oplus W
\end{equation}
with a suitably defined Lie bracket.

\subsection{Reinterpretation as symmetric spaces}
\label{subsec:symmetric_spaces}

Positive hermitian Jordan triples have yet another reinterpretation, namely as compact hermitian symmetric spaces.  Hermitian symmetric spaces were introduced and classified by Cartan \cite{Cartan1935} and Borel \cite{Borel1952, borel2006compactifications}.  Later, Loos studied symmetric spaces \cite{Loos1969symmetricVolI, Loos1969symmetricVolII} and developed the connection between hermitian symmetric spaces, hermitian Jordan pairs, and hermitian Jordan triples \cite{Loos1975, Loos1977}.  We sketch only what we need of this theory.

A hermitian manifold is a natural complex geneneralization of a Riemannian manifold: it is a complex manifold equipped with a smoothly varying complex inner product on each tangent space.  A hermitian symmetric space is the corresponding generalization of a Riemannian symmetric space: it is a connected hermitian manifold $M$ that, for each point $p \in M$, has an inversion symmetry $\sigma_p \maps M \to M$ that respects the hermitian structure.  Here by `inversion symmetry' we mean a smooth map that fixes $p$, squares to the identity, and acts as $-1$ on the tangent space of $p$:
\begin{equation}
\sigma_p (p) = p, \qquad \sigma_p^2 = 1, \qquad d\sigma_p(p) = -1.
\end{equation}

Let us see how a positive hermitian Jordan triple $W$ gives rise to a compact hermitian symmetric space.  We have already seen how $W$ gives a 3-graded Lie algebra $\g$ equipped with grade-reversing antilinear involution $\theta$. From this we shall build a homogeneous space, which will turn out to be the desired compact hermitian symmetric space. 

First, let $G$ be the connected and simply-connected Lie group whose Lie algebra is $\g$.  This is always a complex semisimple Lie group.   Second, let $\p \subseteq \g$ be the Lie subalgebra with 
\begin{equation}
\p = \g_{-1} \oplus \g_0 
\end{equation}
as a vector space.  
Let $P$ be the Lie subgroup of $G$ generated by elements of $\p$.  $P$ is a complex Lie group, called a parabolic subgroup of $G$.  

The quotient $G/P$ is not only a complex manifold:  in fact it can be given a Riemannian structure.  To do this we need the involution $\theta$, which we have not used yet. Let $K$ be the Lie subgroup of $G$ generated by the real Lie subalgebra
\begin{equation}
\k = \{x \in \g \; \vert \; \theta x = x \}
\end{equation}
introduced in equation \eqref{eq:k}.  $K$ is a compact Lie group.  Similarly let $K_0$ be the compact Lie subgroup of $G$ generated by the real Lie subalgebra 
\begin{equation}
\k_0 = \{x \in \g_0 \; \vert \; \theta x = x \}.
\end{equation} 
introduced in equation \eqref{eq:k_0}.  We have $\k_0 = \p \cap \k$ and $K_0 = P \cap K$.

An important result states that
\[    G/P \cong K/K_0  .\]
At left we clearly have a complex manifold; at right we have a compact real manifold that can be given a $K$-invariant Riemannian metric by averaging any Riemannian metric over the action of $K$.   Remarkably, these structures fit together in an optimal way, making $G/P \cong K/K_0$ into a compact hermitian symmetric space on which $K$ acts preserving the hermitian structure.  The larger group $G$ does not, though it preserves the complex structure.

Thus, we have a way to construct a compact hermitian symmetric space from a positive hermitian Jordan pair.  Even better, every compact hermitian symmetric space arises from this procedure.  Indeed, Loos showed that there is an equivalence between these two concepts.  Thus, Cartan's classification of compact hermitian symmetric spaces corresponds to a classification of positive hermitian Jordan pairs.  This classification goes as follows.

A compact hermitian symmetric space is said to be \define{irreducible} if it is not a product of other such spaces.   Every compact hermitian symmetric space is covered by a product of irreducible ones.  The irreducible compact hermitian spaces are precisely the homogeneous spaces of the form $M = K/K_0$, where $K$ is a compact connected simple Lie group with trivial center, and $K_0$ is a connected Lie subgroup having the same rank as $K$ and having $\U(1)$ as center \cite[Ch.\ VII.6]{Helgason2001}.

The $\U(1)$ center of $K_0$ has an important meaning.  Any complex manifold $M$ has tangent spaces that are complex vector spaces, so $\U(1)$ acts on each tangent space $T_pM$ as multiplication by phases.  In the hermitian symmetric space $M = K/K_0$, each point $p \in M$ is stablized by a subgroup conjugate to $K_0$, and the $\U(1$) action on the tangent space comes differentiating the action of this $\U(1)$ subgroup of $K_0$ on $M$.   When we act by $-1 \in \U(1)$ we recover the inversion map $\sigma_p$ characteristic of a symmetric space.  

Cartan and Borel classified irreducible compact hermitian symmetric spaces and show they come in four infinite families and two exceptional cases. These correspond to positive hermitian Jordan triples that are simple---that is, not direct sums of others in a nontrivial way.    The simple positive hermitian Jordan triples are listed in Table \ref{tab:hermitian_jordan_triples}, and the corresponding irreducible compact hermitian symmetric spaces are listed in Table \ref{tab:hermitian-symmetric-spaces}.  For both we have used Cartan's numbering system for symmetric spaces.

\begin{table}[h]
  \centering
  \caption{Simple positive hermitian Jordan triples.}
  \label{tab:hermitian_jordan_triples}
  \begin{tabular}{@{}ll@{}}
    \toprule
     Type & Jordan triple \\
    \midrule
    $\mathrm{A\,III}$
      & $\M_{p,q}(\C)$, $p \times q$ complex matrices, \\ & with $[x,y,z] = \tfrac{1}{2}(xy^\dagger z+ zy^\dagger x)$ \\
    $\mathrm{D\,III}$
      & $\mathfrak{a}_n(\C)$, skew-symmetric $n\times n$ complex matrices, \\
      & \quad with $[x,y,z] = \tfrac{1}{2}(xy^\dagger z+ zy^\dagger x)$ \\
    $\mathrm{C\,I}$
      & $\mathfrak{s}_n(\C)$, symmetric $n\times n$ complex matrices, \\
      & \quad with $[x,y,z] = \tfrac{1}{2}(xy^\dagger z+ zy^\dagger x)$ \\
    $\mathrm{B\,D\,I}$
      & $\C^n$ with  $[x,y,z] = ( x \cdot \bar y) z + (z \cdot \bar y) x - ( x, \bar z\rangle \bar y$ \\
    $\mathrm{E\,III}$
      & $\M_{2,1}(\O_\C)$ with $
  [x,y,z]=\frac{1}{2}(x(y^{\dagger}z)+z(y^{\dagger}x))$ \\
    $\mathrm{E\,VII}$
      & $\mathfrak{h}_3(\O_\C)$ with $[x,y,z] = (x \circ y^\ast) \circ z + x \circ (y^\ast \circ z) - y^\ast \circ (a \circ c)$
      \\
    \bottomrule
  \end{tabular}
\end{table}

In the AIII, DIII and CI Jordan triples, $x^\dagger$ denotes the conjugate transpose of a matrix with complex entries.   In BDI, $\bar x$ denotes the componentwise complex conjugate of a vector $x \in \C^n$.   In EIII and EVII, $\O_\C$ denotes the \define{bioctonions} $\C \otimes \O$: the tensor product of the algebras $\C$ and $\O$ over $\R$.  We can apply both complex and octonionic conjugation to a bioctonion.  We use $x^{\ast}$ to denote the entrywise complex conjugate of a matrix $x$ of octonions, $\bar{x}$ to denote its entrywise octonionic conjugate,  $x^{T}$ to denote its transpose, and $x^\dagger$ to denote $(\bar{x}^{\ast})^{T}$.  In EVII, $\h_3(\O_\C)$ is the space of $3 \times 3$ $\O_\C$-valued matrices $x$ with $x = \bar{x}^T$.   

We call $\h_3(\O_\C)$ the \define{Albert triple} and  $\M_{2,1}(\O_\C) \cong \O_\C^2$ the \define{bi-Cayley triple}.  We describe these in detail in Sections \ref{subsubsec:Albert} and \ref{subsubsec:bi-Cayley}, respectively.   These ``exceptional'' hermitian Jordan triples are the basis of our approach to the Standard Model.   Our approach also relies on the fact that the Standard Model gauge group is the group $K_0$ for a particular Jordan triple of type AIII.  The AIII series of Jordan triples, consisting of spaces $\M_{p,q}(\C)$, have
\begin{equation}      K_0 = \S(\U(p) \times \U(q)) = 
\Big\{ g \in \SU(p+q) \; \Big\vert \;  g = 
\left( 
\begin{array}{c c }
h &  0 \\
0    & h'
\end{array}
\right)
 \; \Big\vert \; h \in \U(p), h' \in \U(q) \Big\}.  
\end{equation}
In particular, $\M_{3,2}(\C)$ has 
\begin{equation} 
\label{eq:standard_model_group}
K_0 = \S(U(3) \times \U(2)) \cong \frac{\SU(3) \times \SU(2) \times \U(1)}{\Z_6} 
\end{equation}
for a particular $\Z_6$ subgroup of the center of $\SU(3) \times \SU(2) \times \U(1)$, which acts trivially on all known particles \cite{BaezHuerta:2009}.

\begin{table}[t]
  \centering
  \caption{Irreducible compact hermitian symmetric spaces.}
  \label{tab:hermitian-symmetric-spaces}
  \begin{tabular}{@{}llllp{0.38\textwidth}@{}}
    \toprule
     Type & $G$ & $K$ & $K_0$ & The compact hermitian $\quad$ symmetric space $K/K_0$ \\
    \midrule
    $\mathrm{A\,III}$
      & $\mathrm{SL}(p+q,\C)$
      & $\SU(p+q)$
      & $\mathrm{S}(\U(p)\times\U(q))$
      & Space of complex $p$-dimensional subspaces of $\C^{p+q}$ \\
    $\mathrm{D\,III}$
      & $\SO(2n,\C)$
      & $\SO(2n)$
      & $\U(n)$
      & Space of orthogonal complex structures on $\R^{2n}$ \\
    $\mathrm{C\,I}$
      & $\Sp(2n,\C)$
      & $\Sp(n)$
      & $\U(n)$
      & Space of complex structures on $\H^{n}$ compatible with the $\quad$ inner product \\
    $\mathrm{B\,D\,I}$
      & $\SO(n+2,\C)$
      & $\SO(n+2)$
      & $\SO(n)\times\SO(2)$
      & Space of oriented real $2$-dimensional subspaces of $\R^{n+2}$ \\
     $\mathrm{E\,III}$
      & $\E_6^{\C}$
      & $\E_6$
      & $(\Spin(10)\times\U(1))/\Z_4$
      & $(\C\otimes\O)\P^{2} = \O_{\C}\P^2$ \\ [6pt]
    $\mathrm{E\,VII}$
      & $\E_7^{\C}$
      & $\E_7$
      & $(\E_6\times\U(1))/\Z_2$
      & Space of symmetric subspaces of $(\H\otimes\O)\P^2$ isomorphic to $(\C\otimes\O)\P^{2}$ \\
    \bottomrule
  \end{tabular}
\end{table}

\section{Jordan pair quantum theory}
\label{sec:quantum}

Euclidean Jordan algebras were originally introduced as a way to reformulate standard quantum theory \cite{Jordan:1933vh} and also generalize it, allowing the study of real, complex and quaternionic theories on an equal footing \cite{Baez2012} along with more exotic possibilities involving the exceptional Jordan algebra $\h_3(\O)$ and so-called spin factors. In this paper we have switched our attention to a related but different class of objects---hermitian Jordan pairs, or equivalently hermitian Jordan triples---because of their possible connection to the Standard Model.  Here we explain how these too offer a way to reformulate standard quantum theory, and generalize it in a different direction.

To give a formulation of quantum theory, we need at least to explain:
\begin{enumerate}
    \item What is a state $e$? 
    \item What is an observable ${\cal O}$? 
    \item What are possible outcomes when we measure ${\cal O}$?  
    \item What is the expectation value of the outcome $\langle{\cal O}\rangle$?
    \item How can the state $e$ evolve in time? 
\end{enumerate}

First we recall the framework.  We start with a hermitian Jordan pair $V=(V_{+},V_{-})$ with triple product $\{\;\bigcdot\;,\;\bigcdot\;,\;\bigcdot\;\}$ and antilinear involution $\kappa$.  Let $W = V_+$ be the corresponding hermitian Jordan triple with skew-linear triple product $[\;\bigcdot\;,\;\bigcdot\;,\;\bigcdot\;]$. In Proposition \ref{prop:Inn_R} we saw that for any pair $v=(v_+,v_-)$ with $v_{\sigma}\in V_{\sigma}$, we have a derivation of $V$ that commutes with $\kappa$, given by $D_v = \delta_v - \delta_{\kappa v}$, where $\delta_v$ was the derivation defined by equation \eqref{eq:delta}.   Correspondingly, on $W$, for any two elements $w,w'\in W$, we have $D_{w,w'} \maps W\to W$ given by
\begin{equation}
  D_{w,w'}
  =[w,w',\;\bigcdot\;]
  -[w',w,\;\bigcdot\;].
\end{equation}
The derivations $D_v$ and $D_{w,w'}$ are anti-hermitian with respect to these inner products.  

With these ingredients in place, our proposed answers to the five questions posed above are:
\begin{enumerate}
    \item A pure state is a minimal tripotent as defined in Section \ref{subsec:tripotents}, \emph{i.e.,} an element $e \in W$ satisfying $[e,e,e]=e$ that cannot be written as the sum of two nonzero mutually orthogonal tripotents (or, equivalently, that has Peirce 1-space of dimension 1 in $W$).
    \item The observables correspond to elements of
    $\inn_{\R}(V)$.  To be precise, since any
    $D\in \inn_{\R}(W)$ is a skew-adjoint operator on $W$, the corresponding observable is ${\cal O}=iD$, which is self-adjoint.  
    \item When we measure an observable ${\cal O}$, the possible outcomes of the measurement are given as usual by the eigenvalues of ${\cal O}$ (which are real).
    \item The expected value $\langle{\cal O}\rangle$ of the observable ${\cal O}$ in the state $e$ is given as usual by
    $\langle e \; \vert \; {\cal O}e\rangle$, where the inner product on $W$ is defined as in equation \eqref{eq:inner_product}.  Note that this inner product is antilinear in the second argument, not the first as physicists prefer.  (This is just a difference of convention.)
    \item The time evolution of pure states is determined by a ``Hamiltonian'', which can be any observable ${\cal H} = iD$, where $D \in \inn_\R(V)$.  The time evolution of a pure state $e(t)$ is then given by this version of Schr\"odinger's equation:
    \begin{equation}
      \label{time_evolution}
        \frac{d}{dt} e(t) = - i{\cal H} e(t)
    \end{equation} 
    The solution of this equation is
    \begin{equation} 
    e(t) = \exp(-it {\cal H}) \; e(0) .
    \end{equation}
    For any time $t \in \R$, the operator $\exp(- it{\cal H} )$ is an automorphism of the hermitian Jordan triple $W$.  Since the concept of minimal tripotent is defined in an automorphism-invariant way, it follows that $e(t)$ is a pure state for all $t$ if $e(0)$ is.
\end{enumerate}
This is our proposal for the Jordan-pair/Jordan-triple reformulation of pure state quantum theory.  We leave the reformulation of mixed-state quantum theory to future work.

This formulation recovers ordinary quantum theory with the $n$-dimensional Hilbert space $\C^n$ when applied to the hermitian Jordan pair $V = (V_+,V_-)$ where $V_+=\M_{n,1}(\C)$ and $V_-=\M_{1,n}(\C)$ are given the triple product $\{x,y,z\}=\frac{1}{2}(xyz+zyx)$ and the antilinear map $\kappa v=v^{\dagger}$.  In this case $V_+ \cong \C^n$ is the familiar space of ``kets'', while $V_- \cong (\C^n)^\ast$ is the space of ``bras'', and $\kappa$ is the usual antilinear isomorphism sending any bra $|e \rangle$ to the corresponding ket $\langle e|$.  The hermitian Jordan triple system is $W = \C^n$ with skew-linear triple product $[x,y,z]=\frac{1}{2}(x y^\dagger z+z y^\dagger x)$.   The Lie algebra $\inn_{\R}(W)$ is $\u(n)$, and the corresponding Lie group is the familiar Lie group of all unitary operators on $\C^n$.   The space of observables $\{{\cal O}\}$ is the space of $n\times n$ Hermitian matrices acting on $\C^{n}$.  The rule for the expected value of an observable reduces to the Born rule, and the time evolution equation (\ref{time_evolution}) reduces to the usual Schr\"odinger equation.

But the Jordan pair formulation also generalizes ordinary quantum theory, since there are other hermitian Jordan triples listed in Table \ref{tab:hermitian_jordan_triples}.  In particular, there are two exceptional hermitian Jordan triples connected to the bioctonions.  These exotic quantum systems can have unfamiliar properties: for example they may contain the observable ${\cal O}$, but not its square!   We are interested in these exotic quantum systems because of their connection to the Standard Model, which we explain in Section \ref{sec:standard}.

Even in the case of ordinary quantum theory, the Jordan triple framework sheds a new light on things.  For example, we can express the framework in the language of 3-graded Lie algebras.  In Section \ref{subsec:graded} we saw that any hermitian Jordan pair $V$ gives rise to a 3-graded complex Lie algebra 
\begin{equation}
  \g = V_{-} \oplus \inn(V) \oplus V_{+},
\end{equation}
with an antilinear involution $\kappa$ which swaps $V_{+}$ and $V_{-}$.  One can argue that:
\begin{itemize}
\item
The degree $0$ component $\g_0 = \inn(V)$ is the Lie algebra of physically relevant operators.  These are all real-linear combinations of symmetry generators (elements of $\inn_\R(V)$, which commute with $\kappa$) and observables (elements of $i\,\inn_\R(V)$, which anti-commute with $\kappa$).
\item
The degree $1$ component $\g_1 = V_{+}$ corresponds to the ``kets'', which we write suggestively as $|e\rangle$.
\item
The degree $-1$ component $\g_{-1} = V_-$ corresponds to the ``bras'' $\langle e|$.
\end{itemize}
So, schematically:
\begin{equation}
  \g=\{\text{bras}\}\oplus \{\text{operators}\}\oplus
  \{\text{kets}\}.
\end{equation}

The Lie bracket in this 3-graded algebra is interesting to consider.  The bracket of two elements of degree $0$ is just the usual commutator of operators.  The bracket of an element of degree $0$ and one of degree $1$ is the action of operators on kets.   Similarly the bracket of an element of degree $0$ and one of degree $-1$ is the action of operators on kets.  A less familiar feature is the bracket of a bra and a ket.  This is an operator. 

The Lie algebra $\g$ has a real Lie subalgebra $\k$ consisting of the fixed points under the antilinear involution $\theta \maps \g \to \g$.  We saw in equation \eqref{eq:k_summary} that this is a $\Z_2$-graded algebra, with the Lie algebra of real inner derivations of $W$ as its even part, and $W$ as its odd part:
\begin{equation}
\k = \inn_\R(W) \oplus W 
\end{equation}
Since elements of $\inn_\R(W)$ correspond to observables after multiplication by $i$, we have 
\begin{equation}
  \k \cong \{\text{observables}\}\oplus
  \{\text{states}\}.
\end{equation}
The inclusion of $\k$ in $\g$ sends observables ${\cal O}$ to skew-adjoint operators $- i {\cal O}$ and states $e$ to bra-ket pairs $(\langle e| , |e \rangle)$.

How does this look in the case of ordinary quantum theory where the space of states is $\C^n$?   Here we get $G =\SL(n+1,\C)$, with its Lie algebra broken into three summands---operators, bras and kets:
\begin{equation}
  \g = \mathfrak{sl}(n+1,\C) =
  \begin{pmatrix}
    \mathfrak{gl}(n,\C) & \C^n \\
    (\C^n)^\ast & *
  \end{pmatrix}
\end{equation}
where the lower right corner is a scalar equal to minus the trace of the upper left matrix in $\gl(n,\C)$.  Similarly we have $K = \SU(n+1)$, with its Lie algebra broken into two summands---observables and states:
\begin{equation} 
\k = \mathfrak{su}(n+1,\C) =
\Big\{ \begin{pmatrix}
    {\cal O} & e \\
    e^\dagger & -\tr({\cal O}) 
  \end{pmatrix} \Big\vert \; {\cal O} \in \u(n), \; e \in \C^n  \Big\}
\end{equation}
Since $\k_0 = \u(n)$, $K_0$ is the familiar unitary group $\U(n)$.  The compact hermitian space $K/K_0$ is $\C\P^n$.  The tangent space at any point of $\C\P^n$ is a copy of the familiar Hilbert space of states, $\C^n$.  Thus, positive hermitian Jordan triples provide a new and somewhat peculiar outlook on ordinary quantum theory. 

\section{The exceptional Jordan pairs and triples}
\label{sec:exceptional}

We have seen that in the classification of hermitian Jordan triples there are two exceptional cases: the Albert triple and the bi-Cayley triple.  Here we introduce these in more detail, as a warmup to discussing their relation to the Standard Model.   None of the material in this section is new, though the details are scattered in the literature and somewhat hard to find \cite{Chu2012, Freudenthal1964,Gunaydin1993,Baez:2001dm,CorradettiMarraniChesterAschheim2022Conjugation}.

\subsection{The octonions and bioctonions}

A celebrated theorem due to Hurwitz states that there are only four normed division algebras: the real numbers $\R$, the complex numbers $\C$, the quaternions $\H$, and the octonions $\O$, with dimensions 1, 2, 4 and 8, respectively \cite{Baez:2001dm}.  Just as a complex number $z\in\C$ may be written $z=a_{0}+a_{1}i$ and a quaternion $q\in\H$ may be written $q=a_{0}+q_{1}i+a_{2}j+a_{3}k$, an octonion $x\in\O$ may be written 
\begin{equation}
  \label{octonion}
  x=a_{0}+a_{1}i+a_{2}j+a_{3}k+a_{4}l+a_{5}\;\!il+a_{6}\;\!jl+a_{7}\;\!kl
\end{equation}
where the coefficients $a_i$ ($i=0,\ldots,7$) are real numbers, each of the seven imaginary units $\{i,j,k,l,il,jl,kl\}$ squares to $-1$, and the product of any two {\it distinct} imaginary units gives a third imaginary unit, according to the rule shown by the so-called ``Fano plane'' in Fig.~\ref{FanoPlane}.  

Any octonion $x$ has octonionic conjugate $\bar{x}$ given by negating the $a_{i}$ for $i=1,\ldots,7$, and 
\[       x\bar{x} = \bar{x} x \ge 0 \]
with equality only for $x = 0$.  If we define $\|x\| = \sqrt{x\bar{x}}$ then $\|xy\| = \|x\| \, \|y\|$.  Thus, we say the octonions are a normed division algebra.

\begin{figure}
  \begin{center}
    \includegraphics[width=1.9in]{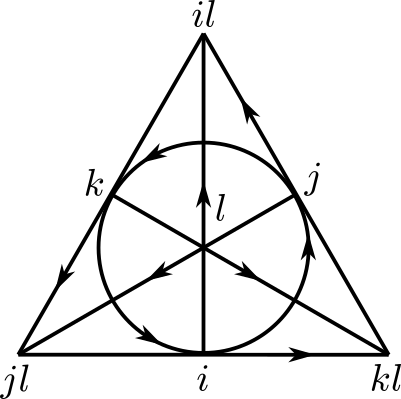}
  \end{center}
  \caption{Fano plane summarizing octonionic multiplication, and our conventions.  See Sec. 2.1 in \cite{Baez:2001dm} for more explanation.}
 \label{FanoPlane}
\end{figure}

More generally, we can consider $\O_{K}$, the octonion algebra over a field $K$.  This is defined just as before, except now the coefficients $a_i$ are valued in $K$.   For the application to the Standard Model discussed in this paper, we are interested in the case $K=\C$ --- {\it i.e.}\ the {\it complex} octonions $\O_\C = \C \otimes \O$, also known as the bioctonions.

Like the real octonions $\O=\O_{\R}$, the bioctonions $\O_\C$ are special in many ways: for example, $\R_\C$, $\C_\C$, $\H_\C$ and $\O_\C$ 
are the only four {\it composition} algebras over $K=\C$, and also the only four {\it alternative} algebras over $K=\C$.  Here an algebra $A$ over a field $K$ is a \define{composition algebra} if it has a quadratic form $N \maps A\to K$ satisfying $N(xy)=N(x)N(y)$, and an \define{alternative algebra} if its associator $(a,b,c)=(ab)c-a(bc)$ is completely antisymmetric in its three arguments.  Equivalently, an algebra is alternative if the subalgebra generated by any two elements is associative.

In many different parts of mathematics, when interesting mathematical objects are classified, they fall into regular families, plus a few exceptional cases.  These exceptional cases tend to be related to the octonions, and to one another \cite{Baez:2001dm}. The two exceptional positive hermitian Jordan pairs, or positive hermitian Jordan triples, are examples.




\subsection{The exceptional hermitian
Jordan pairs and triples}

As seen in Table \ref{tab:hermitian_jordan_triples}, the classification of positive hermitian Jordan triples yields two special cases: the Albert triple and the bi-Cayley triple, which is in fact a subtriple of the Albert triple.  We introduce these in more detail here, along with their corresponding hermitian Jordan pairs.

\vskip 1em

\textbf{Notation.} Given a matrix $x$ of bioctonions, we use $x^{\ast}$ to denote its entrywise complex conjugate, $\bar{x}$ to denote its entrywise octonionic conjugate, $x^{T}$ to denote its transpose, and $x^{\dagger}$ to denote $(\bar{x}^{\ast})^{T}$, where we transpose the matrix $x$ after taking the octonionic \emph{and} complex conjugate of each entry.

\subsubsection{The Albert pair and triple}
\label{subsubsec:Albert}

Let $\h_3(\O_\C)$ denote the \define{Albert algebra}: the 27-dimensional complex vector space of $3\times3$ matrices $x$ with entries in $\O_\C$ such that $x = \bar{x}^T$ (that is, $x_{ij} = \bar{x}_{ji}$).  Let $\circ$ denote the following symmetric bilinear product on this vector space:
\begin{equation}
    x\circ y=\frac{1}{2}(xy+yz)
\end{equation}
where $xy$ denotes ordinary matrix multiplication.  This product $\circ$ makes $\h_3(\O_\C)$ into a Jordan algebra over the complex numbers.

The \define{Albert pair} is the Jordan pair
\begin{equation}
    V=(V_+,V_-)=(\h_3(\O_\C),
    \h_3(\O_\C))
\end{equation}
with the complex-trilinear triple product
\begin{subequations}
  \begin{eqnarray}
  \{x,y,z\}&=&x\circ(y\circ z)+(x\circ y)\circ z
  -y\circ(x\circ z) \\
  &=&\frac{1}{2}(x(yz)+z(yx)).
  \end{eqnarray}
\end{subequations}
The inner derivations of the Albert pair form the complex Lie algebra 
\begin{equation}
  \inn(V) = \e_6^{\C} \oplus \C
\end{equation}
and the Tits--Kantor--Koecher construction applied to the Albert pair gives the complex Lie algebra $\e_7^{\C}$ with this 3-grading:
\begin{subequations}
  \begin{eqnarray}
  \e_7^{\C} &=&V_- \oplus \inn(V)\oplus V_+ \\ [3pt]
  &=& \h_3(\O_\C)\oplus 
  (\e_{6}^{\C}\oplus\C) \oplus \h_3(\O_\C).
\end{eqnarray}
\end{subequations}

The antilinear involution $\kappa \maps V\to V$ ($\kappa_{\sigma} \maps V_{\sigma}\to V_{-\sigma}$) 
given by 
\begin{equation}
  \kappa x = x^{\dagger}
\end{equation}
makes the Albert pair into a hermitian Jordan pair.  
The {\bf Albert triple} is the corresponding  hermitian Jordan triple
\begin{equation}
  W = \h_3(\O_\C)
\end{equation}
with skew-linear triple product 
\begin{subequations}
  \begin{eqnarray}
    [x,y,z]&=&x\circ(y^{\ast}\circ z)+(x\circ y^{\ast})\circ z
  -y^{\ast}\circ(x\circ z) \\
  &=&\frac{1}{2}(x(y^{\dagger}z)+z(y^{\dagger}x)).
  \end{eqnarray}
\end{subequations}
The inner derivations of this hermitian Jordan triple form a real Lie algebra
\begin{equation}
  \inn_{\R}(W) = \e_{6}^{\R}\oplus \u(1).
\end{equation}
where $\e_6^\R$ denotes the compact real form of $\e_6$. 

The antilinear involution $\kappa$ defines a grade-reversing antilinear involution on $\e_7^{\C}$, whose fixed points form a $\Z_2$-graded real Lie algebra as in equation \eqref{eq:k}.  This turns out to be the compact real form of $\e_7$:
\begin{equation}
    \inn_\R(W)\oplus W =
    (e_{6}^\R \oplus \u(1)) \oplus h_3(\O_\C) = \e_7^\R
\end{equation}
whose odd part $\h_3(\O_\C)$ transforms as the 
complex 27-dimensional fundamental representation of 
$\e_{6}^{\R}$. 

The Albert triple also gives rise to the hermitian symmetric space that Cartan called EVII (see Table \ref{tab:hermitian-symmetric-spaces}).  This has complex dimension 27, and it is the space of all symmetric subspaces of the Rosenfeld plane $(\H\otimes\O)\P^2$ that are isomorphic to the Rosenfeld plane $(\C \otimes \O)\P^2$.  In the context of this symmetric space, the Lie algebra $\e_{7}^{\R}$ has the following geometric interpretation.  The full Lie algebra $\e_{7}^{\R}$ generates the isometry group of this symmetric space.  This is a $\Z_2$-graded Lie algebra whose even part $\e_{6}^\R \oplus \u(1)$ generates the subgroup preserving a given point $p$ of this symmetric space, while its odd part $\h_3(\O_\C)$ describes the tangent space at $p$.  The Lie subalgebra $\u(1)$ generates a $\U(1)$ action which simply acts by complex phases on the tangent space at $p$.

\subsubsection{The bi-Cayley pair and triple}
\label{subsubsec:bi-Cayley}

Let $M_{p,q}(\O_\C)$ denote the space of $p\times q$ matrices with entries in $\O_\C$.  The \define{bi-Cayley pair} is the Jordan pair
\begin{equation}
    V=(V_+,V_-)=(M_{1,2}(\O_\C),M_{2,1}(\O_\C))
\end{equation}
with trilinear product
\begin{equation}
    \{x,y,z\}=\frac{1}{2}(x(yz)+z(yx)).
\end{equation}
The inner derivations of this pair form the complex Lie algebra 
\begin{equation}
     \inn(V)= \so(10,\C) \oplus \C
\end{equation}
and the Tits--Kantor--Koecher construction applied to the bi-Cayley pair gives the complex Lie algebra $\e_6^{\C}$ with this 3-grading:
\begin{subequations}
\begin{eqnarray}
    \e_6^{\C}&=& V_-\oplus\inn(V)\oplus V_+ \\
    &=& M_{1,2}(\O_\C)\oplus
    {[\so(10,\C)+\C]}\oplus M_{2,1}(\O_\C).
\end{eqnarray}
\end{subequations}

The antilinear involution 
$\kappa:V\to V$ ($\kappa_{\sigma}:V_{\sigma}\to V_{-\sigma}$) given by 
\begin{equation}
    \kappa x = x^{\dagger}
\end{equation}
makes the bi-Cayley pair into a hermitian Jordan pair. The \define{bi-Cayley triple} is the corresponding hermitian Jordan triple
\begin{equation}
    W=\M_{2,1}(\O_\C)
\end{equation}
with skew-linear triple product
\begin{equation}
  [x,y,z]=\frac{1}{2}(x(y^{\dagger}z)+z(y^{\dagger}x)).
\end{equation}
The inner derivations of this triple form the compact real Lie algebra
\begin{equation}
  \inn_{\R}(W)= \so(10,\R)\oplus \u(1).
\end{equation}

The antilinear involution $\kappa$ defines a grade-reversing antilinear involution on $\e_6^{\C}$, whose fixed points form a $\Z_2$-graded real Lie algebra as usual.  This is the compact real form of $\e_6$:
\begin{equation}
    \e_6^{\R} = \inn_{\R}(W) \oplus W=
    (\so(10,\R)\oplus \u(1))\oplus \O_{\C}^2.
\end{equation}

The bi-Cayley triple gives rise to the hermitian symmetric space that Cartan called EIII (again see Table \ref{tab:hermitian-symmetric-spaces}), also called the bioctonionic Rosenfeld plane, $(\C \otimes \O)P^2$.  In the context of this symmetric space, the Lie algebra $\e_{6}^{\R}$ has the following interpretation.  The full Lie algebra $\e_{6}^{\R}$ generates the isometry group of $(\C \otimes \O)\P^2$.  This is a $\Z_2$-graded Lie algebra whose even part $\so(10) \oplus \u(1)$ generates the subgroup preserving a given point $p$ of this symmetric space, while its odd part $\O_{\C}^2$ describes the tangent space at $p$.  As before, the Lie algebra $\u(1)$ generates a $\U(1)$ action which acts as multiplication by phases on the tangent space at $p$.

\subsubsection{Relation of the Albert and bi-Cayley pairs and triples}
\label{subsubsec:Albert_and_bi-Cayley}

The bi-Cayley triple is a ``subtriple'' of the Albert triple: any element of the bi-Cayley triple, say
\[
  \begin{pmatrix}
    x  \\
    y
  \end{pmatrix}  \in \O_C^2 ,
\]
maps to an element of the Albert triple, namely
\[
  \begin{pmatrix}
    0 & 0 & x \\
    0 & 0 & y \\
    \bar x & \bar y & 0 
  \end{pmatrix} \in \h_3(\O_C).
\]
This map is an injective homomorphism of hermitian Jordan triples.  There is an analogous story for the corresponding hermitian Jordan pairs.   

Indeed, this is an example of a more general fact: if we pick any minimal tripotent in the Albert triple, its Peirce $1/2$-space is a copy of the bi-Cayley triple.   



\section{Relation to the Standard Model}
\label{sec:standard}

The Standard Model has gauge group
\begin{equation}
    G_\SM=(\SU(3)\times \SU(2)\times \U(1))/\Z_{6}
\end{equation} 
and the Standard Model fermions come in three generations, each of which transforms under this gauge group in the 16-dimensional complex representation
\begin{eqnarray}
  \label{rho_SM}
  \rho_\SM\!&\!=\!&\!
  \color{cyan}(3,2,\!+\frac{1}{6})
  \color{black}\oplus
  \color{magenta}(\bar{3},1,\!+\frac{1}{3})
  \color{black}\oplus
  \color{darkorange}(\bar{3},1,\!-\frac{2}{3}) \oplus
  \color{darkred}(1,2,\!-\frac{1}{2})
  \color{black}\oplus
  \color{blue}(1,1,\!+\;\!1\;\!)
  \color{black}\oplus
  (\;\!1,1,0), \nonumber\\
  \!&\!=\!&\! 
  \color{cyan}q_{L}
  \color{black}\oplus 
  \color{magenta}\bar{d}_{R}
  \color{black}\oplus 
  \color{darkorange}\bar{u}_{R} 
  \oplus 
  \color{darkred}\ell_{L}
  \color{black}\oplus
  \color{blue}\bar{e}_{R}
  \color{black}\oplus 
  \bar{\nu}_{R} 
\end{eqnarray}
where, in the second line, we have added the particle names physicists attach to these 6 irreducible representations.

In this section we explain how $G_\SM$ and $\rho_\SM$ are obtained from the two exceptional hermitian Jordan triples (Albert and bi-Cayley) by:
\begin{itemize}
    \item either starting from the smaller exceptional hermitian Jordan triple---the bi-Cayley triple $\O_{\C}^{2}$---and picking {\it two} minimal tripotents ({\it i.e.}\ two pure quantum states) that are \define{colinear}: that is, in each other's Peirce $1/2$-space,
    \item or starting from the larger exceptional triple---the Albert triple $\mathfrak{h}_{3}(\O_{\C})$---and picking {\it three} minimal tripotents that are mutually colinear. 
\end{itemize}
In both cases, we will refer to Table \ref{tab:SM_triples}.

\begin{table}[ht]
\centering
\begin{tabular}{@{}lll@{}}
\toprule
Jordan triple & $\Z_{2}$-graded Lie algebra & Symmetric space \\
\midrule
$\tilde{W} = \mathfrak{h}_3(\O_\C)$
  & $\e_{7}=[\e_{6}\oplus \u(1)]\oplus \mathfrak{h}_{3}(\O_\C)$
  & $\mathrm{E}_7\,/\,[(\E_6 \times\U(1))/\Z_2]$ \\[3pt]
$W = \O_\C^2$
  & $\e_{6}=[\so(10)\oplus \u(1)]\oplus\O_\C^2$
  & $\mathrm{E}_6\,/\,[(\Spin(10) \times\U(1))/\Z_4]$ \\[3pt]
$W' = \mathfrak{a}_{5}(\C)$
  & $\so(10)=[\su(5)\oplus \u(1)]\oplus \mathfrak{a}_{5}(\C)$
  & $\Spin(10)\,/\,[\U(5)]$ \\[3pt]
$W'' = \M_{3,2}(\C)$
  & $\su(5)=[\g_{\SM}]\oplus \M_{3,2}(\C)$
  & $\SU(5)\,/\,G_\SM$ \\
\bottomrule
\end{tabular}
\caption{The sequence of Jordan triples leading to the Standard Model. From left to right in each line we have a positive hermitian Jordan triple, the corresponding real $\Z_{2}$-graded Lie algebra with bosonic part in square brackets, and the corresponding hermitian symmetric space.  If we pick any mininal idempotent in a given triple, its Peirce $1/2$-space is the triple in the row below. In the bottom row, $\g_{\SM}=\su(3)\oplus \su(2)\oplus \u(1)$ and $G_\SM=(\SU(3)\times \SU(2)\times \U(1))/\Z_{6}$.}
\label{tab:SM_triples}
\end{table}


\subsection{The Standard Model from the bi-Cayley triple $\O_{\C}^2$}
\label{SM_from_bi-Cayley}

Here we explain how the Standard Model gauge group and its representation on one generation of fermions neatly emerge from the bi-Cayley triple $W=\O_{\C}^{2}$.

First, if we pick two minimal tripotents ({\it i.e.} two pure quantum states) $e_{1}$ and $e_{2}$ that are colinear in $W$, then the subspace $W''\subset W$ that is colinear with $e_{1}$ and $e_{2}$ is the subtriple $W''=\M_{3,2}(\C)$ whose inner automorphism group is precisely the Standard Model gauge group $G_\SM=[\SU(3)\times \SU(2)\times \U(1)]/\Z_{6}$.  

Moreover, picking the two colinear minimal tripotents $e_1, e_2\in W$ selects a natural embedding of $G_\SM$ in $\Inn_{\R}(W)=(\Spin(10) \times \U(1))/\mathbb{Z}_{4}$ under which the original bi-Cayley triple $W=\O_{\C}^{2}$ naturally transforms precisely as a generation of Standard Model fermions (with representation $\rho_\SM$ of $G_{SM}$)---with the six irreducible components of $\rho_\SM$ (the six particle types in a single generation) precisely corresponding to the six components of the Peirce decomposition of $W$ with respect to $e_1$ and $e_2$.

We explain this in several steps.

\subsubsection{The initial triple $W=\O_{\C}^{2}$ resembles an $\SO(10)$ GUT}

First, note from the second row of Table \ref{tab:SM_triples} that the bi-Cayley triple $W=\O_{\C}^{2}$ has inner automorphism group $(\Spin(10) \times \U(1))/\Z_{4}$, and gives rise to the $\Z_{2}$-graded Lie algebra $\Inn_{\R}(W)\oplus W$, whose even or ``bosonic'' part $\Inn_{\R}(W)$ and odd or ``fermionic'' part $W$ transform as the adjoint representation and the complex 16-dimensional spinor representation of $\Spin(10)$, respectively.  

Ignoring for a moment the extra factor of $\U(1)$, this initial structure bears a notable resemblance to the structure of $\SO(10)$ grand unification, where the gauge group is really $\Spin(10)$, and the gauge boson fields transform in the adjoint representation of $\Spin(10)$, while the fermionic fields of a single generation transform under the complex 16-dimensional spinor representation, $\S_{16}$.

\subsubsection{The Standard Model gauge group $G_\SM$}
\label{subsubsec:group}

Next, note that if we pick any two minimal tripotents ({\it i.e.} two pure quantum states) $e_{1}, e_{2}\in W$ which are colinear ({\it i.e.}\ in each other's Peirce $1/2$-space), then the subspace of $W$ that is colinear with both $e_1$ and $e_2$ is the Jordan triple $\M_{3,2}(\C)$.  This has real inner automorphism group $G_\SM$.

In Table \ref{tab:SM_triples} we outline how to reach this result in iterated stages:
\begin{itemize}
\item
We start with the bi-Cayley triple $W=\O_{\C}^2$ on the second row of the table, with real inner automorphism group \begin{equation}
K_0 = \Inn_\R(W)=(\Spin(10) \times \U(1))/\mathbb{Z}_{4}.
\end{equation}
\item
If we then pick any minimal idempotent $e_1\in W$, its Peirce $1/2$-space $W'\subset W$ is the Jordan subtriple $W'=\mathfrak{a}_{5}(\C)$ on the third row of the table.   Recall that $\mathfrak{a}_5(\C)$ is the hermitian Jordan triple consisting of antisymmetric $5 \times 5$ complex matrices, with real inner automorphism group 
\begin{equation}
K_{0}'=\Inn_{\R}(W')=\S(\U(5) \times \U(1)) \cong \SU(5)\times \U(1).
\end{equation}
\item
If we then pick any minimal tripotent $e_2\in W'$, its Peirce $1/2$-space $W''\subset W'$ is the Jordan subtriple $W''=\M_{3,2}(\C)$ on the fourth row, with inner automorphism group described in equation \eqref{eq:standard_model_group}:
\begin{equation}
  K_{0}''=\Inn_{\R}(W'')= (\SU(3)\times \SU(2)\times \U(1))/\Z_{6}=G_\SM
\end{equation}
\end{itemize}
Thus, we have reached the Standard Model gauge group.


\subsubsection{The Standard Model representation $\rho_\SM$}
\label{subsubsec:rep}

Finally, we show that picking two colinear minimal tripotents $e_1, e_2\in W$ also selects a particular embedding of $G_\SM$ in $\Inn_{\R}(W)=(\Spin(10) \times \U(1))/\mathbb{Z}_{4}$, with the property that the original bi-Cayley triple $W=\O_{\C}^{2}$ naturally transforms under $G_\SM$ precisely as one generation of Standard Model fermions.

We do this by iterating another procedure that takes us down the rows of Table \ref{tab:SM_triples}:
\begin{enumerate}
\item 
We start with $W = \O_\C^2$ and its real inner automorphism group, which we now call $G$ (against our previous convention for symmetric spaces).
\item
We then pick a minimal tripotent $e_1\in W$ and take the subgroup $G'\subset G$ that: (i) acts with determinant $1$ on $W$ and (ii) preserves $e_1$ up to a phase (and hence preserves its Peirce $1/2$-space $W'\subset W$).  
\item
We then pick a minimal tripotent $e_2\in W'$, and take the subgroup $G''\subset G'$ that: (i) acts with determinant $1$ on $W'$ and (ii) preserves $e_{2}$ up to a phase (and hence preserves its Peirce $1/2$-space $W''\subset W'$). 
\end{enumerate}
In this way, we see that that chain of triples $W''\subset W'\subset W$ leads to a chain of subgroups $G'' \subset G' \subset G$.  In Appendix \ref{sec:appendix} we show that
\begin{equation}
\begin{array}{ccl}
G &\cong& (\Spin(10) \times \U(1))/\Z_2, \\
G' &\cong& \U(5), \\
G'' &\cong& G_\SM.
\end{array}
\end{equation}
In this way we obtain a chain of embeddings $G_\SM \subset \SU(5) \subset \Spin(10)$. It is well known \cite{BaezHuerta:2009,Slansky:1981yr} that if we restrict the 16-dimensional complex spinor representation $\S_{16}$ of $\Spin(10)$ to $\G_\SM$ along this chain, we obtain the Standard representation $\rho_\SM$.  

A further striking consequence is that the six types of particle in one generation of fermions, listed in equation \eqref{rho_SM}, correspond to the six components of the Peirce decomposition of $\O_{\C}^{2}$ with respect to the tripotents $e_1$ and $e_2$.

We can check this by writing a general element $w\in W=\O_{\C}^{2}$ as $w=(\alpha,\beta)$ where 
$\alpha,\beta\in\O_{\C}$ are complex octonions,
and (without loss of generality) taking our two colinear primitive tripotents $e_1,e_2\in W$ to be $e_{1}=(e,0)$ and $e_2=(0,e)$, where $e=\frac{1}{2}(1+i\ell)\in \O_{\C}$ (with $i\in\C$ and $\ell\in\O$) is an idempotent in $\O_{\C}$ and $f$ is its complementary idempotent $f=\frac{1}{2}(1-i\ell)\in\O_{\C}$.  We have
\begin{equation}
e^2=e, \quad f^2=f, \quad ef=fe=0, \quad e+f=1.
\end{equation}
In fact, these two colinear idempotents are the same as the two complex structures on $\O_{\C}$ described by Krasnov \cite{Krasnov:2022meo}.

Using the Peirce projection operators (\ref{Peirce_projectors}) in $W$, along with the 
Peirce decomposition $\alpha=e\alpha e+e\alpha f+f\alpha e+f\alpha f$ of any element $\alpha\in\O_{\C}$, it is straightforward to check that the Peirce projectors act individually as
\begin{eqnarray}
        P_1(e_1)(\alpha,\beta)&=&(e\alpha e,0), \nonumber\\
        P_{1/2}(e_1)(\alpha,\beta)&=&(e\alpha f+f\alpha e,e\beta), \nonumber\\
        P_0(e_1)(\alpha,\beta)&=&(f\alpha f,f\beta), \nonumber\\
        P_1(e_2)(\alpha,\beta)&=&(0,e\beta e), \nonumber\\
        P_{1/2}(e_2)(\alpha,\beta)&=&(e\alpha,e\beta f+f\beta e), \nonumber\\
        P_0(e_2)(\alpha,\beta)&=&(f\alpha,f\beta f).
\end{eqnarray}
and hence act in combination as
\begin{subequations}
\begin{eqnarray}
  P_{\;\!1\;\!}(e_{2})P_{\;\!1\;\!}(e_{1})(\alpha,\beta)&=&(0,0) \nonumber\\
  P_{\;\!1\;\!}(e_{2})P_{\frac{1}{2}}(e_{1})(\alpha,\beta)&=&(0,e\beta e)
  \;\;\;\;\;\;\;\to\;\;
  \color{blue}(1,1,+1)=\bar{e}_{R} \nonumber\\
  P_{\;\!1\;\!}(e_{2})P_{\;\!0\;\!}(e_{1})(\alpha,\beta)&=&(0,0)
\end{eqnarray}
\begin{eqnarray}
  P_{\frac{1}{2}}(e_{2})P_{\;\!1\;\!}(e_{1})(\alpha,\beta)&=&(e\alpha e,0)
  \;\;\;\;\;\;\;\to\;\;(1,1,\;\,0\;\;\!)=\bar{\nu}_{R} \nonumber\\
  P_{\frac{1}{2}}(e_{2})P_{\frac{1}{2}}(e_{1})(\alpha,\beta)&=&(e\alpha f,e\beta f)\;\;\;\!\to\;\;
  \color{cyan}(3,2,+\frac{1}{6})=q_{L}\nonumber\\
  P_{\frac{1}{2}}(e_{2})P_{\;\!0\;\!}(e_{1})(\alpha,\beta)&=&(0,f\beta e)
  \;\;\;\;\;\;\;\!\to\;\;
  \color{magenta}(\bar{3},1,+\frac{1}{3})=\bar{d}_{R}
\end{eqnarray}
\begin{eqnarray}
  P_{\;\!0\;\!}(e_{2})P_{\;\!1\;\!}(e_{1})(\alpha,\beta)&=&(0,0) \nonumber\\
  P_{\;\!0\;\!}(e_{2})P_{\frac{1}{2}}(e_{1})(\alpha,\beta)&=&(f\alpha e,0)
  \;\;\;\;\;\;\to\;\;
  \color{darkorange}(\bar{3},1,-\frac{2}{3})=\bar{u}_{R} \nonumber\\
  P_{\;\!0\;\!}(e_{2})P_{\;\!0\;\!}(e_{1})(\alpha,\beta)&=&(f\alpha f,f\beta f)\;\to\;\;\color{darkred}(1,2,-\frac{1}{2})=\ell_{L}.
\end{eqnarray}
\end{subequations}
where on the right-hand side we have shown how the non-vanishing subspaces transform under the embedding of $G_\SM\in \Inn_{\R}(W)$ explained above.

Thus, the non-vanishing Peirce subspaces of $W$ associated with  the two colinear primitive tripotents $e_1$ and $e_2$ precisely correspond to the 6 irreducible representations in $\rho_\SM$---that is, to the six particle types appearing in a single generation of Standard Model fermions!

\subsection{The Standard Model from the Albert triple $\mathfrak{h}_{3}(\O_{\C})$}

In Subsection \ref{SM_from_bi-Cayley} we started from the {\it smaller} exceptional hermitian Jordan triple, the bi-Cayley triple $W = \O_\C^2$.  We saw that if we choose {\it two} minimal tripotents ({\it i.e.}\ two pure quantum states) in $W$ that are colinear, then their common Peirce $1/2$-space in $W$ is a hermitian Jordan triple $W'' \cong \M_{3,2}(\C)$ whose inner automorphism group is precisely the Standard Model gauge group $G_\SM$.  We also used this to get $G_\SM$ to act on $\O_\C^2$ exactly as it does on one generation of fermions via the representation $\rho_\SM$.

It is natural to ask what happens if we take one step further back and start instead from the {\it larger} exceptional hermitian Jordan triple: the Albert triple $\tilde{W} = \h_{3}(\C)$.   

In this case, if we choose {\it three} minimal tripotents $e_{0},e_{1},e_{2}\in \tilde{W}$ that are colinear, then extending the analysis of Sec.~\ref{subsubsec:group} by one step, we can see their common Peirce $1/2$-space in $\tilde{W}$ is the same hermitian Jordan triple $W''$ whose automorphism group is $G_\SM$.  If one repeats the iterative construction of Section \ref{subsubsec:rep}, now starting at $\tilde{W}$, one finds that the chain of triples $W''\subset W'\subset W\subset \tilde{W}$ leads to this chain of subgroups:
\begin{equation}
\G_\SM \subset \U(5) \subset (\Spin(10) \times \U(1))/\Z_4 \subset (\E_6 \times \U(1))/\Z_2.
\end{equation}

To see this, recall from Section \ref{subsubsec:Albert_and_bi-Cayley} that the bi-Cayley triple is the Peirce $1/2$-space of any minimal tripotent $e_{0}$ in the Albert triple.  This takes us down from the first line of Table \ref{tab:SM_triples} to the second.  We can then continue moving down the table as in the previous section, taking the common Peirce $1/2$-space of more and more tripotents, until we obtain $W''$ as the common Peirce $1/2$-space of $e_0, e_1$ and $e_2$.

An interesting feature of this approach is that it gives an alternative view of $G_{\SM}$ as precisely the subgroup of $\Inn_\R(\tilde{W}) = (\E_6 \times \U(1))/\Z_2$ that preserves the determinant of elements in $\h_{3}(\O_\C)$ and also preserves the first two of these tripotents ($e_0$ and $e_1$) ``on the nose" and the third ($e_2$) up to phase.

However, notice that in this approach, we still get one generation of Standard Model fermions from the action of $\G_\SM$ on the bi-Cayley triple. Does the larger Albert triple have anything interesting to say about particles in the Standard Model, or yet-unobserved particles?  We do not yet know the answer, but the following point seems worth noting.  Once we choose the first minimal tripotent $e_0\in\tilde{W}= \h_{3}(\O_{\C})$, its Peirce 1/2-space is the bi-Cayley triple $W=\O_{C}^{2}$, which then neatly encodes a single generation of standard model fermions, as described in Section \ref{SM_from_bi-Cayley}.  But all three minimal colinear tripotents $\{e_0,e_1,e_2\}$ are equivalent in $h_{3}(\O_{\C})$, so we could equally well choose $e_1$ or $e_2$ as our ``first'' choice, and thereby descended to the bi-Cayley triple in three different ways (related by symmetry). 

It is natural to wonder whether this threefold choice, which may be related to the phenomenon of triality, is related to the existence of three generations of fermions in the Standard Model.  That triality may somehow be related to the existence of three generations is a longstanding hope \cite{Ramond,FureyHughes2025}.  We do not yet see how to incorporate it in our framework, but feel it is worth pointing out as an interesting topic for future research.

\section{Acknowledgments}

Endre Bokor’s work was supported by the UKRI Centre for Doctoral Training in Algebra, Geometry and Quantum Fields (AGQ), Grant Number EP/Y035232/1.  LB is supported by the STFC Consolidated Grant `Particle Physics at the Higgs Centre.' Research at Perimeter Institute is supported by the Government of Canada, through Innovation, Science and Economic Development, Canada and the Province of Ontario via the Ministry of Research, Innovation and Science. 

\appendix
\section{Proof of the claimed embedding of $\G_\SM$}
\label{sec:appendix}

In this appendix we prove a result claimed in Section \ref{subsubsec:rep}, which constructs the Standard Model gauge group as a subgroup of the real inner automorphism group of the bi-Cayley triple.

\begin{thm}
Let $W = \O_\C^2$ be the bi-Cayley triple, and choose two colinear minimal tripotents $e_1, e_2 \in W$. Then there is a subgroup $G''$ isomorphic to $ G_\SM$ of $\Inn_\R(W)=(\Spin(10)\times \U(1))/\Z_{4}$ that satisfies the following: its natural action on $W$ preserves $e_1$ and $e_2$ up to phase, and its natural actions on $W$ and on $W'=P_{1/2}(e_1)W \cong \mathfrak{a}_{5}(\C)$ have determinant $1$. 
\end{thm}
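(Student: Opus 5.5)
We reduce everything to explicit Lie theory. First we use that $\Inn_\R(W)$ acts transitively on minimal tripotents (Loos, as quoted in Section \ref{subsec:tripotents}). We also use that the stabilizer of a minimal tripotent $e_1$, up to phase, acts transitively on the minimal tripotents of its Peirce $1/2$-space $W'$. Together these let us assume without loss of generality that $e_1=(e,0)$ and $e_2=(0,e)$, as in Section \ref{subsubsec:rep}. We then identify $\Inn_\R(W)$ with $(\Spin(10)\times\U(1))/\Z_4$, where $W\cong\S_{16}$ and the $\U(1)$ acts by scalars.

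The first step is to show that the stabilizer of the line $\C e_1$ is $(\Spin(10)\times\U(1))\cap\mathrm{Stab}(\C e_1)$. This is covered by $\U(5)\times\U(1)$: the Peirce $1$-space $\C e_1$ is a pure spinor line, and the stabilizer in $\Spin(10)$ of a pure spinor line is the double cover $\widetilde{\U}(5)$, which acts on the pure spinor by $\det^{1/2}$. Under this subgroup, $W$ splits into Peirce spaces as
\[
1\oplus 10\oplus\bar 5 \;=\; \C e_1\oplus W'\oplus E_0(e_1),
\]
and $W'\cong\Lambda^2\C^5\cong\mathfrak a_5(\C)$.

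Next we impose determinant one on $W$. Since $\Spin(10)$ is semisimple, every element of $\Spin(10)$ has determinant $1$ on $\S_{16}$. The central phase $\lambda$ has determinant $\lambda^{16}$, so this condition cuts the $\U(1)$ factor down to $\Z_{16}$. Combined with the $\Z_4$ quotient, this leaves essentially the subgroup $\widetilde{\U}(5)$, acting on $W'=\Lambda^2\C^5$ twisted by a power of $\det$. This gives $G'\cong\U(5)$, and along the way we check that the $\Z_4$ identification kills the double cover ambiguity.

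The next step repeats the argument inside $W'=\mathfrak a_5(\C)$. A minimal tripotent $e_2$ there is a rank-two antisymmetric matrix, say $e_{45}$. Its line is stabilized in $\U(5)$ by $\U(3)\times\U(2)$, and its Peirce $1/2$-space is $\M_{3,2}(\C)$. The action of $(A,B)\in\U(3)\times\U(2)$ on $W'=\Lambda^2\C^3\oplus(\C^3\otimes\C^2)\oplus\Lambda^2\C^2$, twisted by the determinant character from the previous step, has total determinant equal to a fixed product $(\det A)^a(\det B)^b$. We compute $a$ and $b$. The plan is to show that the condition $(\det A)^a(\det B)^b=1$ cuts $\U(3)\times\U(2)$ down to $\S(\U(3)\times\U(2))$, possibly times a finite group that acts trivially after the twist. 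If so, $G''\cong\S(\U(3)\times\U(2))\cong G_\SM$ by \eqref{eq:standard_model_group}.

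Finally, we check that $G''$ preserves both $e_1$ and $e_2$ up to phase, which holds by construction, and that it still has determinant $1$ on $W$, which is inherited from $G'$.

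The main obstacle is the bookkeeping of characters and finite quotients: the $\det^{1/2}$ on the pure spinor, the $\Z_4$ in $\Inn_\R(W)$, and the weights on $W'$. These must combine so that the determinant condition on $W'$ produces exactly $\S(\U(3)\times\U(2))$ rather than a finite extension of it. I would settle this by working at the level of the maximal torus, writing the weights of $\S_{16}$ as $(\pm\tfrac12)^5$ with an even number of minus signs, together with the $\U(1)$ charge. I would compute the determinant conditions as linear constraints on torus parameters, and then identify the resulting group with its torus and its Weyl group.
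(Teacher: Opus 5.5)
Your reduction to the standard pair $e_1=(e,0)$, $e_2=(0,e)$ is fine, and so is your description of the stabilizer of $\C e_1$ as the image of $\widetilde{\U}(5)\times\U(1)$, with $W=\C e_1\oplus\Lambda^2\C^5\oplus\bar 5$. The proof breaks at the step you flagged, and the cause is structural, not bookkeeping. The subgroup of $\Inn_\R(W)$ cut out by your conditions (determinant $1$ on $W$ and on $W'$, lines $\C e_1$ and $\C e_2$ preserved) is disconnected, so it cannot be isomorphic to $G_\SM$.

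A concrete witness is the scalar $-1_W$, which is $-1\in\U(1)$, or equivalently $-1\in\Spin(10)$. It has $\det_W=(-1)^{16}=1$ and $\det_{W'}=(-1)^{10}=1$, and it preserves every line, so it lies in the group. It is not in the identity component, whose Lie algebra is $\mathfrak{s}(\u(3)\oplus\u(2))\subset\su(5)$ and therefore annihilates $e_1$.

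In coordinates, let $(M,z)\in\widetilde{\U}(5)$ act on $\Lambda^{2j}\C^5$ by $z^{-1}\Lambda^{2j}M$, with $z^2=\det M$, and let $\lambda\in\U(1)$ act by scalars. Your conditions become $M=\mathrm{diag}(A,B)$, $\lambda^{16}=1$ and $\det M=\lambda^{10}$. The discrete data $(\lambda,z\lambda^{-5})\in\mu_{16}\times\{\pm1\}$, taken modulo the $\Z_4$, give eight components. Since $\Inn_\R(W)$ is a group of operators on $W$, none of the extra components can "act trivially."

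Your integer-twist model fails in the same way. There $\det_{W'}=(\det A\det B)^{4+10k}$, and $4+10k\neq\pm1$ for every integer $k$. The same defect already affects your intermediate claim $G'\cong\U(5)$. The scalars $\zeta 1_W$ with $\zeta^{16}=1$ survive (only $\zeta^4=1$ comes from $\Spin(10)$), so $G'$ has four components. Also, the $\Z_4$ quotient collapses nothing inside $\widetilde{\U}(5)$, which already embeds in $\Inn_\R(W)$. This last point is not needed for the theorem.

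The missing idea is that the theorem only asks for \emph{some} subgroup, so you should take the identity component, or simply write it down directly. Let $G''$ be the image of $\S(\U(3)\times\U(2))\subset\SU(5)$, where $\SU(5)\subset\widetilde{\U}(5)\subset\Spin(10)$ is the lift $M\mapsto(M,1)$. Then:
\begin{enumerate}
\item $\SU(5)$ fixes $e_1$ exactly, since every character of $\SU(5)$ is trivial, and it preserves $W'=\Lambda^2\C^5$.
\item Being semisimple, $\SU(5)$ has determinant $1$ on $W$ and on $W'$.
\item The block subgroup preserves $\C\,\epsilon_4\wedge\epsilon_5$.
\item $\SU(5)$ acts faithfully on $W$, already on $E_0(e_1)\cong\bar 5$.
\end{enumerate}
Hence $G''\cong\S(\U(3)\times\U(2))\cong G_\SM$, with no characters to track.

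This also shows how your route differs from the paper's. The paper constructs $G''$ as the image of an injective homomorphism $\Phi=\Phi_{e_1}\circ\Phi_{e_2}$ out of $\Inn_\R(W'')\cong G_\SM$, correcting by the one-parameter groups $s_{e_i}$ and scalar phases. There the isomorphism type is automatic and only the constraints need checking. In your route the constraints are automatic and the isomorphism type is the whole issue, and it is false for the maximal constraint group.
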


\begin{proof}
To obtain such a subgroup $G''$ of $\Inn_\R(W)$ we show that there is a group isomorphism $\Phi \maps \Inn_\R(W'') \rightarrow G''\subset \Inn_\R(W)$ from the inner automorphism group $\Inn_\R(W'') \cong G_\SM$ of $W''=P_{1/2}(e_1)P_{1/2}(e_2)W \cong \M_{3,2}(\C)$. We show this by finding two injective homomorphisms
\[   \Phi_{e_1} \maps \Inn_\R(W') \to \Inn_\R(W) \]
and 
\[   \Phi_{e_2} \maps \Inn_\R(W'')  \to \Inn_\R(W') \]
such that $\Phi=\Phi_{e_1}\circ \Phi_{e_2}$.
By Table \ref{tab:hermitian_jordan_triples} we have $\Inn_\R(W') \cong \U(5)$ and $\Inn_\R(W'') \cong G_\SM$.  We shall prove that all elements of $G' = \im \Phi_{e_1}$ preserve $e_1$ up to phase and have unit determinant on $W$.  Similarly, all elements $G'' = \im \Phi_{e_2}$ preserve $e_2$ up to phase and have unit determinant on $W'$.

First, we show that there is an injective homomorphism $\Phi_{e_1}$ that maps each element $g \in \Inn_\R(W')$ to an element $\Phi_{e_1}(g) \in \Inn_\R(W)$ that has unit determinant, with $\Phi_{e_1}(g)|_{W'}=g|_{W'}$ and $\Phi(g)(e_1)=\exp{(i\alpha)}e_1$ for some $\alpha \in \R$. We note that the inner automorphism $\exp(i \alpha D(e_1,e_1))$ acts as $e^{i\alpha}$ on the one-dimensional Peirce space $P_{1}(e_1)W$, as $e^{i \alpha/2}$ on the 10-dimensional space $P_{1/2}(e_1)W$, and as the identity on the 5-dimensional space $P_{0}(e_1)W$.  Thus 
\[  \textstyle{\det_W}(\exp(i \alpha D(e_1,e_1))=e^{6i\alpha} ,\]
so if for all $\beta \in \R$ we define
\begin{equation}\label{eqn:appendx:se1}
    s_{e_1}(\beta)=\exp(8i\beta D(e_1,e_1)-3i \beta)\in \Inn_\R(W)
\end{equation}
then we have
\[  s_{e_1}(\beta) w'=e^{i\beta}w' \]
for any $w'\in P_{1/2}(e_1)W$ and $s_{e_1}(\alpha+\beta)=s_{e_1}(\alpha)s_{e_1}(\beta)$. Thus, for all $g \in \Inn_\R W$ we have 
\[   \Phi_{e_1}(g) = \exp(-i\beta(g))s_{e_1}(\beta(g))g \]
where $\beta(g) \in \R$ is such that $\det_W(g) =e ^{i\beta(g)}$, where we are taking the determinant of the natural action of $g \in \Inn_\R(W')$ on $W$, which is based on the embedding of $W'$ in $W$ as a hermitian Jordan subtriple. It can be seen that this definition implies $\Phi_{e_1}(g)|_{W'}=g|_{W'}$, $\det_W \Phi_{e_1}(g)=1$ for any $g$, and that $\Phi_{e_1}$ is a smooth map, since $\beta(g)$ is a real valued smooth function and left multiplications by $\exp{iD(e_1,e_1)}$ and $\exp{(i\alpha)}$ are smooth as well. It is also a group homomorphism, since for any $g_1,g_2 \in \Inn_\R(W')$
\begin{equation*}
    \begin{split}
        \Phi_{e_1}(g_1)\Phi_{e_1}(g_2)
        &=\exp(-i\beta(g_1))s_{e_1}(\beta(g_1))g_1\exp(-i\beta(g_2))s_{e_1}(\beta(g_2))g_2
        \\&=\exp(-i\beta(g_1))\exp(-i\beta(g_2))s_{e_1}(\beta(g_1))s_{e_1}(\beta(g_2))g_1g_2
        \\&=\exp(-i(\beta(g_1)+\beta(g_2)))s_{e_1}(\beta(g_1)+\beta(g_2))g_1g_2
        \\&=\exp(-i(\beta(g_1g_2)))s_{e_1}(\beta(g_1g_2))g_1g_2=\Phi_{e_1}(g_1g_2),
    \end{split}
\end{equation*}
where we used the fact that for any $\beta \in \R$ $s_{e_1}(\beta)$ and $e^{i\beta}$ commute with all elements of $\in \Inn_\R(W')$.

By the Peirce multiplication rule \ref{eq:peirce_multiplication_rule}, any inner derivation of $W'$ applied to $e_1$ gives an element of \eqref{eq:peirce_multiplication_rule}
\begin{equation*}
    [P_{1/2}(e_1)W, \, P_{1/2}(e_1)W, \, P_{1}(e_1)W] \subset P_{1}(e_1)W.
\end{equation*}
Exponentiating such derivations, it follows that for all $g \in \Inn_\R(W')$ we have $g e_1 \in P_1(e_1)W$.  But the Peirce 1-space of $e_1$ is one-dimensional because $e_1$ is a minimal tripotent.  Thus we have 
\[  g(e_1)=e^{i\alpha} e_1\] 
for some $\alpha \in \R$, and it follows that
\begin{equation*}
\begin{split}
        \Phi_{e_1}(g)e_1&=\exp(-i\beta(g)) \, s_{e_1}(\beta(g)) \, g(e_1) 
        \\&=\exp(-i\beta(g)+i\alpha)\, s_{e_1}(\beta(g))e_1
        \\&=\exp(4i\beta(g)+i\alpha) e_1,
\end{split}
\end{equation*}
where we used the definition of the function $s_{e_1}$ in equation \eqref{eqn:appendx:se1} and the fact that $D(e_1,e_1)e_1=e_1$.
This map is injective, since $\Phi_{e_1}(g)|_{W'}=g|_{W'}\neq g'|_{W'}=\Phi_{e_1}(g')|_{W'}$ implies $\Phi_{e_1}(g)\neq \Phi_{e_1}(g')$. Thus, $\Phi_{e_1}$ is a group isomorphism between $\Inn_\R(W')\cong \U(5)$ and $G' = \im \Phi_{e_1}$.

Next, the map $\Phi_{e_2}$ is defined analogously to $\Phi_{e_1}$ by replacing in the above arguments $W$ by $W'$, $W'$ by $W''$ and using the corresponding inner automorphism groups $\Inn_\R (W')\cong \U(5)$ and $\Inn_\R (W'')\cong G_\SM$. The role of $e_1$ that induced the colinear subspace $W'$ of $W$ is also replaced by $e_2$ that induces the colinear subspace $P_{1/2}(e_2)W'=W''\subset W'$. Taking into account that $W'$ and its Peirce subspaces have different dimensions in this case, we replace $s_{e_1}$ by 
\begin{equation*}
    s_{e_2}(\beta)=\exp{(10i\beta D(e_2,e_2)-4i\beta)}.
\end{equation*}
Then, the definition of $\Phi_{e_2}$ follows that of $\Phi_{e_1}$:
\begin{equation*}
    \Phi_{e_2}(g)=s_{e_2}(\beta'(g))\exp(-i\beta'(g))g,
\end{equation*}
where $\beta'(g)$ is given by $\det_{W'}(g)=e^{i\beta'(g)}$. The same arguments as for $\Phi_{e_1}$ now also imply that $\Phi_{e_2}$ is smooth group homomorphism and bijective between $\Inn_\R(W'')\cong G_\SM$ and $G'' = \im\Phi_{e_2}$.

Thus, for any $g\in \Inn_\R(W'')$, the map $\Phi_{e_2}(g)$ preserves $e_2$ up to phase and has the same natural action on $W''$ as $g$. Consequently, for any $g\in\Inn_\R (W'')$ $\Phi(g)=\Phi_{e_1}\circ\Phi_{e_2}(g)\in \Inn_\R (W)$ has the same action on $W'$ as $\Phi_{e_2}(g)$ which has the same action on $W''\subset W'$ as $g$.  Thus $\det_W\Phi_{e_1}(\Phi_{e_2}(g))=1$ and $\det_{W'}\Phi_{e_1}(\Phi_{e_2}(g))=\det_{W'}\Phi_{e_2}(g)=1$. Additionally, the colinear tripotents $e_1\in W$ and $e_2\in W'\subset W$ are preserved up to phase by $\Phi_{e_1}\circ\Phi_{e_2}(g)$. The existence of such an isomorphism $\Phi$ means that there is a subgroup $G'' \cong G_\SM$ of $\Inn_\R(W)=(\Spin(10)\times \U(1))/\Z_{4}$ that satisfies the statements of the theorem.
\end{proof}

\bibliographystyle{unsrt} 
\bibliography{References}{}

\end{document}